\newtheorem{theorem}{Theorem}
\theoremstyle{plain}
\def\be{\begin{equation}}
\def\ee{\end{equation}}
\newtheorem{lemma}{Lemma}
\newtheorem{notation}{Notation}
\newtheorem{remark}{Remark}
\numberwithin{equation}{section}
\renewcommand\Re{\operatorname{Re}}
\def\Pr{\text{Pr}}
\begin{document}
\title[Pattern Formation in Rayleigh B{\'e}nard Convection]{Pattern Formation in Rayleigh--B{\'e}nard Convection}
\author[Sengul]{Taylan Sengul}
\address[Sengul]{Department of Mathematics,
Indiana University, Bloomington, IN 47405}
\email{msengul@indiana.edu}

\author[Wang]{Shouhong Wang}
\address[Wang]{Department of Mathematics,
Indiana University, Bloomington, IN 47405}
\email{showang@indiana.edu, http://www.indiana.edu/\~{ }fluid}
\thanks{The work  was supported in part by the
Office of Naval Research and by the National Science Foundation.}
\begin{abstract}
The main objective of this article is to study the three-dimensional Rayleigh-B{\'e}nard convection in a rectangular domain from a pattern formation perspective. It is well known that as the Rayleigh number crosses a critical threshold, the system undergoes a Type-I transition, characterized by an attractor bifurcation. The bifurcated attractor is an $(m-1)$--dimensional homological sphere where $m$ is the multiplicity of the first critical eigenvalue. When $m=1$, the structure of this attractor is trivial. When $m=2$, it is known that the bifurcated attractor consists of  steady states and their connecting heteroclinic orbits. The main focus of this article is then on the pattern selection mechanism and stability of rolls, rectangles and mixed modes (including hexagons) for the case where $m=2$. We derive in particular  a complete  classification of  all transition scenarios, determining the patterns of the bifurcated steady states, their stabilities and the basin of attraction of the stable ones. The theoretical results lead to interesting physical conclusions, which are in agreement with known experimental results. For example, it is shown in this article that  only the pure modes are stable whereas the mixed modes are unstable.
\end{abstract}
\keywords{Rayleigh-B\'enard convection, pattern formation, rolls, rectangles, hexagons, dynamic transitions}

\maketitle
\section{Introduction}
Over the years,  the Rayleigh-B\'enard convection problem, together with the Taylor problem, has
become one of the paradigms for studying nonequilibrium phase transitions and pattern formation in nonlinear sciences. There is an extensive literature on the subject; see e.g.  reviews by Busse \cite{busse1978non}, Chandrasekhar \cite{Chandrasekhar1981}, Cross \cite{Cross1993}, Getling \cite{Getling1997},
Koschmieder \cite{Koschmieder1993}, Lappa \cite{Lappa2009}, Ma and Wang \cite{ptd}, and the references therein.

The problem is complete from the dynamic transition perspective (Ma and Wang \cite{Ma2004a,Ma2007}). The main result in this direction is that the system always undergoes a Type-I (continuous) transition as the instability driving mechanism, namely Rayleigh number, crosses a critical threshold $R_c$, thanks to the symmetry of the linear operator, properties of the nonlinearity and asymptotic stability of the basic state at the critical threshold. Moreover, the system has a bifurcated attractor which is an $(m-1)$--dimensional homological sphere where $m$ is the number of critical eigenvalues of the linear operator. 

The main objective of this paper is to study pattern formation and the structure of the bifurcated attractor for the Rayleigh--B{\'e}nard convection.
The structure of the bifurcated attractor is trivial when $m=1$. Namely, the attractor consists of two attracting steady states approximated by the critical mode  with opposite flow orientations. 

When $m=2$, the picture is far from being complete. There are some known characteristics of this attractor such as the attractor must be homeomorphic to $S^1$ which contains either four or eight steady states connected by heteroclinic orbits or is a circle of steady states.

From a pattern formation point of view, there is enough motivation to study the case $m=2$. To understand the relative  stabilities of steady states (patterns) to perturbations of other pattern types, there must exist at least two critical modes. 

In general the relation between the two horizontal length scales, for which $m=2$, is nonlinear and hence it is very difficult to give a general characterization of every possible transition scenario. In this work, under the assumption that the wave numbers of the critical modes are equal, we are able to give a complete characterization for $m=2$ case.  

Depending on its horizontal wave indices $i_x$ and $i_y$, a single critical mode can be either a roll (when at least one of $i_x$ or $i_y$ is zero) or a rectangle (when both $i_x$ and $i_y$ are non-zero) where $i_x$ and $i_y$ are non-negative integers which cannot vanish together. Thus there are three possible cases. Namely, (a) one of the critical modes is a roll while the other one is a rectangle, (b) both critical modes are rolls, (c) both critical modes are rectangles. 

In each case, we explicitly find nondimensional numbers which determine the number, patterns and the stabilities of the bifurcated steady states. We also determine the basin of attraction of each of the stable steady states. 

In all the scenarios, we found that after the transition, only pure modes (rolls or rectangles) are stable and the mixed modes are unstable. Our result is conclusive when one of the critical modes is a roll type. When both critical modes are rectangles, we only have computational evidence.

When both critical modes are rolls, the stable steady states after the transition are rolls. When both critical modes are rectangles, computational evidence suggests that the stable steady states after the transition are rectangles. When one critical mode is a roll and the other one is a rectangle, the stable states after the transition can be either only rolls or both rolls and rectangles. 

The problem is usually studied in the infinitely extended horizontal domain setting which eliminates the effects of the boundaries in the horizontal directions. Our setting is a 3D rectangular domain with free-slip boundary conditions for the velocity, that is the fluid can not cross the boundaries but is allowed to slip. The thermal boundary conditions are adiabatically isolated side walls so that no heat is transferred through them and perfectly conducting top and bottom boundaries.

Technically, the analysis is carried out using the dynamical transition theory (Ma and Wang \cite{ptd}). One key ingredient is the reduction of the original system to the center manifold generated by the two unstable modes. The only modification that has been made is (following Sengul and Wang \cite{Sengul2011}), we  expand the center manifold using a basis which differs from the eigenfunctions of the original linear operator. This allows us to passby the difficulties associated with determining the eigenpairs in terms of the system parameters. We also make use of computer assistance, namely a Mathematica code, which carries out numerous integrations which are due to the interactions of the critical modes with the non-critical ones.

The paper is organized as follows: In Section 2, the governing equations and the functional setting of the problem is introduced. Section 3 deals with the linear theory. We present our main results in Section 4. The proof of these theorems are given in Section 5. In section 6, we present the physical conclusions derived from our theorems. Finally, Section 7 is the conclusion section. 

\section{Governing Equations and The Functional Setting}
With the Boussinesq approximation, the non-dimensional equations governing the motion and the states  of the Rayleigh-B{\'e}nard convection in a nondimensional  rectangular domain
$\Omega =\left( 0,L_{1}\right)
\times \left( 0,L_{2}\right) \times \left( 0,1\right) \subset \mathbb{R}^{3}$
are given as follows; see \cite{Chandrasekhar1981} among others: 
\begin{equation}  \label{main}
\begin{aligned} 
& \frac{\partial \mathbf{u}}{\partial t}+\left(
\mathbf{u}\cdot \nabla \right) \mathbf{u} = \text{Pr}\left( -\nabla p+\Delta
\mathbf{u}+R \theta {\bf k}\right) , \\ 
& \frac{\partial \theta }{\partial t}+\left(
\mathbf{u}\cdot \nabla \right) \theta = w+\Delta \theta , \\ & \nabla \cdot
\mathbf{u} =0, \\ & \mathbf{u}\left( 0\right) =\mathbf{u}_{0}\text{, \ \
}\theta \left( 0\right) =\theta _{0}. \end{aligned}
\end{equation}

The unknown functions are the velocity $\mathbf{u}=\left( u,v,w\right) $,  the temperature $\theta $,  and the pressure $p$. These unknowns represent a deviation 
from a motionless state basic steady state with  a constant positive vertical temperature
gradient.  In addition ${\bf k}$ stands for the unit vector in the $z$-direction.

The non-dimensional numbers in \eqref{main} are the Rayleigh number $R$ which is the control parameter and $\Pr$,  the Prandtl number.

The above system is supplemented with a set of boundary conditions. We use the free-slip boundary conditions for the velocity on all the boundaries. Thermally, the top and the bottom boundaries are assumed to be perfectly conducting and the horizontal boundaries are adiabatically isolated. Namely, the boundary conditions are as follows:
\begin{equation}  \label{bc}
\begin{aligned} 
& u =\frac{\partial v}{\partial x}=\frac{\partial w}{\partial x}=\frac{\partial \theta }{\partial x}=0 \,&&\text{at}\,x=0,L_{1}, \\ 
& \frac{\partial u}{\partial y}=v=\frac{\partial w}{\partial y}=\frac{\partial \theta }{\partial y}=0 \,&&\text{at}\,y=0,L_{2}, \\ &\frac{\partial u}{\partial z}=\frac{\partial v}{\partial z}=w=\theta =0  \,&&\text{at}\,z=0,1.
\end{aligned}
\end{equation}

For the functional setting, we define the relevant function spaces:
\begin{equation}\label{func spaces}
\begin{aligned} 
& H =\left\{ (\mathbf{u},\theta)\in L^{2}\left(
\Omega,\mathbb{R}^4 \right) :\nabla\cdot\mathbf{u}=0,\mathbf{u}\cdot n\mid_{\partial \Omega}=0\right\},\\
& H_1 =\left\{ (\mathbf{u},\theta)\in H^{2}\left(
\Omega, \mathbb{R}^4 \right) :\nabla\cdot\mathbf{u}=0,\mathbf{u}\cdot n\mid_{\partial \Omega}=0,\theta\mid_{z=0,1}=0\right\}.
\end{aligned}
\end{equation}

For $\phi=({\bf u},\theta)$, let $G:H_1\rightarrow H$ and $L_R:H_1\rightarrow H$ be defined by:
\begin{equation}\label{operators}
\begin{aligned}
& L_R\phi=(\Pr\mathcal{P}(\Delta {\bf u}+R\theta {\bf k}),w+\Delta\theta),\\
& G(\phi)=-(\mathcal{P}({\bf u}\cdot\nabla){\bf u},({\bf u}\cdot\nabla)\theta),
\end{aligned}
\end{equation}
with $\mathcal{P}$ denoting the Leray projection onto the divergence-free vectors.
The equations \eqref{main} and \eqref{bc} can be put into the following functional form:
\begin{equation} \label{functional}
\frac{d\phi}{dt}=L_R\phi+G(\phi),\qquad \phi(0)=\phi_0.
\end{equation}

The results concerning existence and uniqueness of \eqref{functional} are classical and we refer the interested readers to Foias, Manley and Temam \cite{Foias1987} for details. In particular, we can define a semigroup:
\[ S(t):\phi_0\rightarrow \phi(t). \]

Finally we define the following trilinear forms which will be used in the proof of the main  theorems:
\begin{equation} \label{bilinear}
\begin{aligned}
& G(\phi_{1},\phi_{2},\phi_{3})=-\int_{\Omega}({\bf u_{1}}\cdot\nabla){\bf u_{2}}\cdot{\bf u_{3}}-\int_{\Omega}({\bf u_{1}}\cdot\nabla)\theta_{2}\cdot \theta_{3},\\
& G_s(\phi_1,\phi_2,\phi_3)= G(\phi_1,\phi_2,\phi_3)+ G(\phi_2,\phi_1,\phi_3).
 \end{aligned}
\end{equation}

\section{Linear Theory}
We  recall in this section the well-known linear theory of the problem. 
\subsection{Linear Eigenvalue Problem}
We first study the eigenvalue problem:
\begin{equation}\label{EV}
\begin{aligned} 
& \Pr(\Delta {\bf u}+R\,\theta\, {\bf k}-\nabla p)=\beta {\bf u},\\
& w+\Delta\theta=\beta \theta,\\
& \text{div} {\bf u}=0,
\end{aligned}  
\end{equation}
with the boundary conditions \eqref{bc}.
Thanks to the boundary conditions, we can represent the solutions $\phi_S=({\bf u}_S,\theta_S)$, ${\bf u}_S=(u_S,v_S,w_S)$ by the separation of variables:
\begin{equation} \label{sepofvar}
\begin{aligned} 
& u_S =U_S \sin (L _{1}^{-1}s_x\pi x)\cos (L _{2}^{-1}s_y\pi y) \cos  (s_z\pi z), \\ 
& v_S =V_S \cos (L _{1}^{-1}s_x\pi x)\sin (L _{2}^{-1}s_y\pi y) \cos (s_z\pi z), \\ 
& w_S =W_S \cos (L _{1}^{-1}s_x\pi x)\cos (L _{2}^{-1}s_y\pi y) \sin (s_z\pi z), \\ 
& \theta_S =\Theta_S  \cos (L _{1}^{-1}s_x\pi x)\cos (L _{2}^{-1}s_y\pi y) \sin  (s_z\pi z),
\end{aligned}  
\end{equation} 
for $S=(s_x,s_y,s_z)$ where $s_x\geq0$, $s_y\geq0$, $s_z\geq0$. It is easy to see that only eigenvalues $\beta_S$, $S\in\mathcal{Z}$ can become positive where
\[ \mathcal{Z}=\{(s_x,s_y,s_z)\mid s_x\geq0, s_y\geq0,s_z\geq0,\, (s_x,s_y)\neq(0,0) \text{ and }s_z\neq0\}.\]
For $S=(s_x,s_y,s_z)\in\mathcal{Z}$, the amplitudes of the horizontal velocity field can be found as:
\begin{equation*}
U_S =-\frac{s_x\pi}{L _{1}} \frac{s_z\pi}{\alpha_S^2} W_S , \qquad  V_S=-\frac{s_y \pi}{L _{2}} \frac{s_z\pi}{\alpha_S^2}W_S.
\end{equation*}
We define $\alpha_S$, the horizontal wave number and $\gamma_S$, the full wave number by:
\begin{equation}\label{wavenumbers}
\alpha_S=\sqrt{\frac{s_x^2\pi^2}{L_1^2}+\frac{s_y^2\pi^2}{L_2^2}},\qquad \gamma_S=\sqrt{\frac{s_x^2\pi^2}{L_1^2}+\frac{s_y^2\pi^2}{L_2^2}+s_z^2\pi^2}.
\end{equation}
Taking the divergence of the first equation in \eqref{EV} we find:
\[
\Delta p=R\frac{\partial\theta}{\partial z}. \]
Now taking the Laplacian of the first equation, replacing $\Delta p$ by the above relation and using \eqref{sepofvar} we obtain:
\begin{equation}\label{precharequ}
\begin{aligned}
& \gamma_S^2(\Pr\,\gamma_S^2+\beta) W_S-R\,\Pr\,\alpha_S^2\Theta_S=0,\\
& W_S-(\gamma_S^2+\beta)\Theta_S=0.
\end{aligned}
\end{equation}
For each $S\in\mathcal{Z}$, the above equations have two solutions $\beta_S^1>\beta_S^2$ which satisfy the following  equation:
\begin{equation}\label{charequ}
 \gamma_S^2(\gamma_S^2+\beta)(\Pr\,\gamma_S^2+\beta)-R\,\Pr\,\alpha_S^2=0.
\end{equation}
We find that amplitudes of the normalized critical eigenvectors as:
\begin{equation}
W_S=\beta_S^1(R)+\gamma_S^2,\qquad \Theta_S=1.
\end{equation}
Now solving \eqref{charequ} for $R$ at $\beta=0$, the critical Rayleigh number can be defined as:
\begin{equation}\label{Rr}
R_c=\underset{S\in \mathcal{Z}}{\min} R_S, \qquad R_S:=\frac{\gamma_S^6}{\alpha_S^2}.
\end{equation}
From \eqref{Rr}, one sees that for a minimizer $S=(s_x,s_y,s_z)$ of $R_S$, the vertical index $s_z$ is 1. We will denote the set of critical indices $S$ minimizing \eqref{Rr} by $\mathcal{C}$:
\[
\mathcal{C}=\{S=(s_x,s_y,1)\in\mathcal{Z}\mid R_S\leq R_{S^{\prime}}, \, \forall S^{\prime}\in\mathcal{Z}\}
\]
For small length scale region, the map in Figure \ref{wavemapcolored} shows the horizontal critical wave indices that are picked by the selection mechanism \eqref{Rr}.

\begin{figure}
\includegraphics[scale=.5]{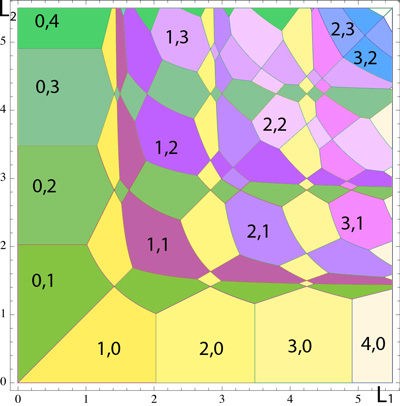}
\caption{The selection of critical horizontal wave indices in the $L_1$--$L_2$ plane. The same coloring indicates equal wave indices.\label{wavemapcolored}}
\end{figure}

It is well known that we have the following PES condition:
\begin{align}
& \beta _S^1(R) =\left\{ 
\begin{aligned}
& <0, && \lambda <R_c, \\ 
& =0, && \lambda =R_c,\\ 
& >0, && \lambda >R_c,
\end{aligned}\right. 
          &&  \forall S\in\mathcal{C},\label{PES1} \\
& \Re \beta(R_c) <0, && \forall \beta\notin\{\beta_S^1\mid S\in \mathcal{C}\}.
\label{PES2}
\end{align}
By \eqref{charequ}, corresponding to $S=(s_x,s_y,s_z)\in\mathcal{Z}$, there are two eigenvalues $\beta_S^i$, and two corresponding eigenfunctions $\phi_S^i$, $i=1,2$. If a critical mode has wave index $I$ then the corresponding eigenfunction is $\phi_I^1$ which we will simply denote by $\phi_I$.

Depending on the horizontal wave indices, there are two types of critical modes corresponding to two different patterns. If the wave index $I=(i_x,i_y,1)$ of a critical mode is such that one of the horizontal wave indices $i_x$, $i_y$ is zero, the corresponding eigenfunction has a roll pattern. When both horizontal indices are non-zero, the corresponding eigenfunction has a rectangular pattern. Figure \ref{RollRecPat} shows a sketch of these patterns.

\begin{figure}
\centering
\subfigure[A roll pattern with wave index $J=(0,j_y,1)$.]{
\includegraphics[scale=0.3]{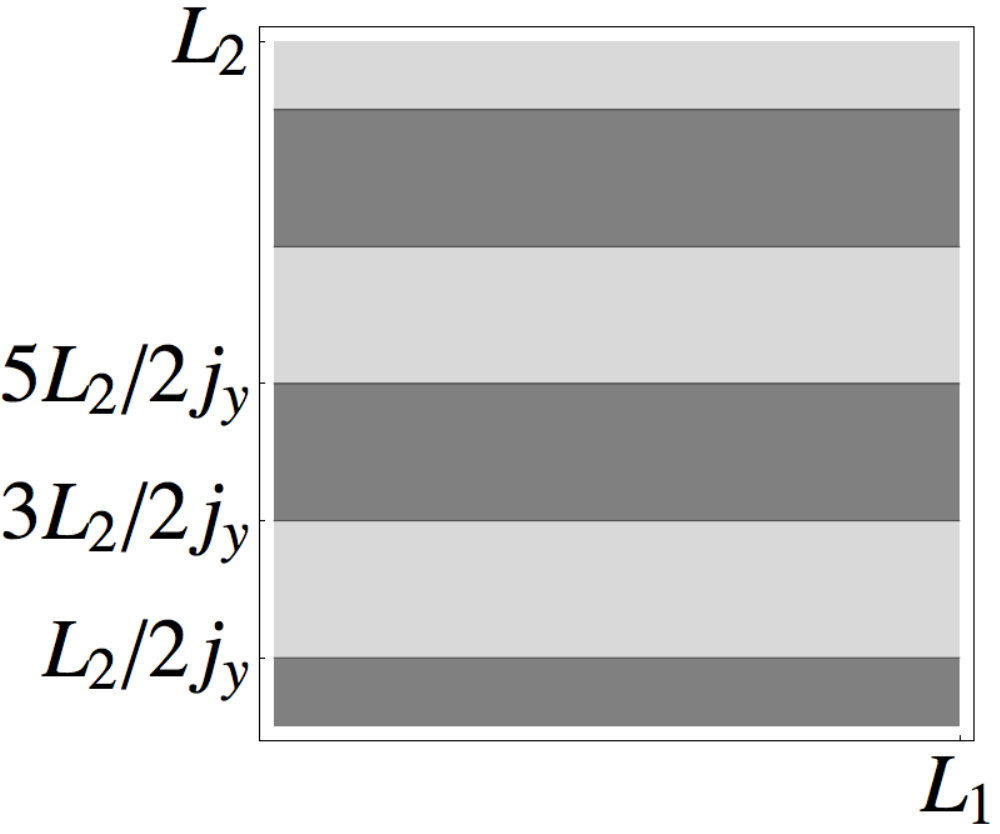}
}
\qquad
\subfigure[A rectangle pattern with wave index $I=(i_x,i_y,1)$. ]{
\includegraphics[scale=0.3]{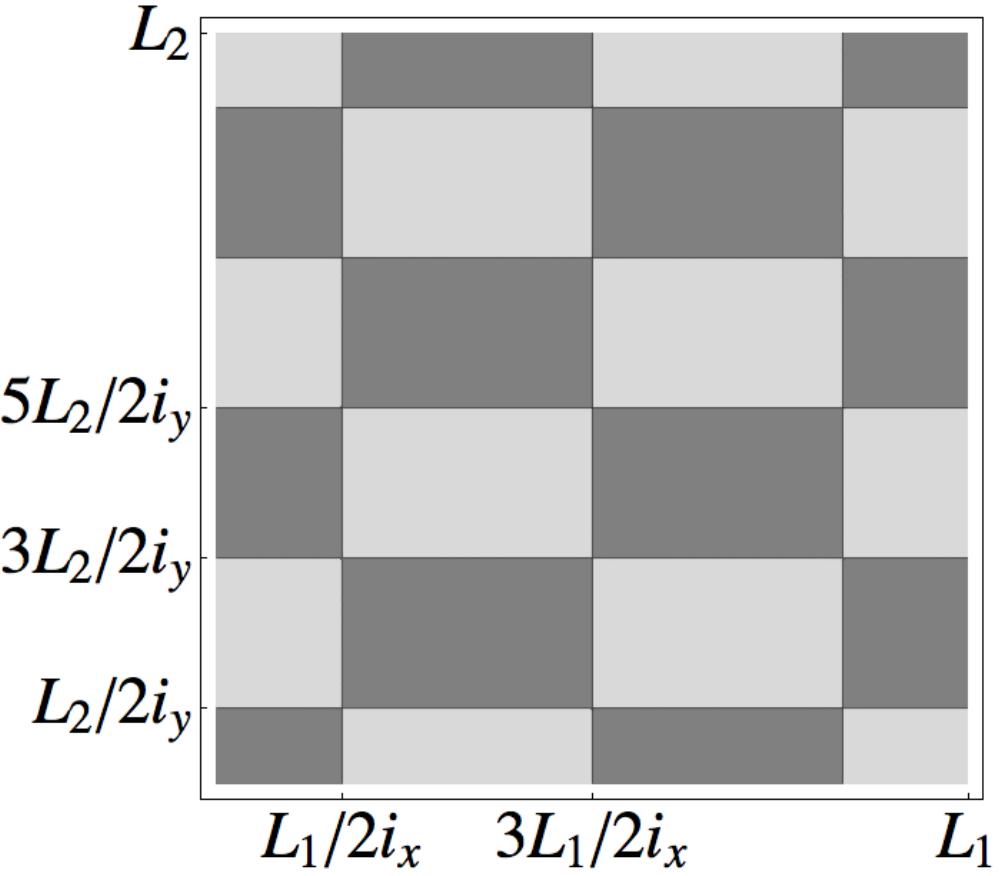}
}
\caption{Regions of positive and negative vertical velocity of a rectangular and a roll mode at the mid-plane $z=1/2$. \label{RollRecPat}}
\end{figure}
\subsection{Estimation of the critical wave number}
As it will be shown, the dynamic transitions depend on the critical wavenumber $\alpha$. In the case of infinite horizontal domains, the critical wave number is found to be $\alpha=\pi/\sqrt{2}\approx 2.22$ corresponding to a critical Rayleigh number $R_c=27\pi^4/4\approx658$. For rectangular domains, the wave number is not constant and is a function of the length scales. Following estimates will be important in the physical remarks section:

\begin{lemma}\label{wave number estimate lemma}
Let $\alpha$ be the critical wave number minimizing \eqref{Rr}. Then
\[
\begin{aligned}
& \alpha\geq \frac{\pi}{2^{1/3}(2^{2/3}+1)^{1/2}}\approx1.55, \quad \text{ for all } L_1,L_2,\\
& \alpha < \frac{2^{2/3}\pi}{\sqrt{1+2^{2/3}}}\approx3.10, \quad \text{ if }L_1>2^{1/3}\sqrt{1+2^{2/3}}\approx2.03,\\
& \alpha\rightarrow\frac{\pi}{\sqrt{2}}, \quad L_1\rightarrow\infty.
\end{aligned}
\] 
\end{lemma}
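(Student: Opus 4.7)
The plan is to rewrite the whole optimization in terms of the one-variable function
\[
 f(\alpha) = \frac{(\alpha^{2}+\pi^{2})^{3}}{\alpha^{2}}.
\]
Any minimizer $S\in\mathcal{C}$ of $R_{S}=\gamma_{S}^{6}/\alpha_{S}^{2}$ has $s_{z}=1$, so $\gamma_{S}^{2}=\alpha_{S}^{2}+\pi^{2}$ and hence $R_{S}=f(\alpha_{S})$. A direct differentiation will show that $f$ is strictly decreasing on $(0,\pi/\sqrt{2})$ and strictly increasing on $(\pi/\sqrt{2},\infty)$, with unique global minimum at $\pi/\sqrt{2}$. The critical wave number $\alpha$ therefore minimizes $f$ over the admissible discrete set
\[
 A=\Bigl\{\,\sqrt{\tfrac{s_{x}^{2}\pi^{2}}{L_{1}^{2}}+\tfrac{s_{y}^{2}\pi^{2}}{L_{2}^{2}}}\ :\ (s_{x},s_{y})\in\mathbb{Z}_{\geq 0}^{2}\setminus\{(0,0)\}\,\Bigr\}.
\]

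For the lower bound, my tool is a \emph{doubling trick}: if $\alpha\in A$ is realized by $(s_{x},s_{y})$ then $2\alpha\in A$ is realized by $(2s_{x},2s_{y})$, so minimality forces $f(\alpha)\leq f(2\alpha)$. Clearing denominators and taking cube roots reduces this to $2^{2/3}(\alpha^{2}+\pi^{2})\leq 4\alpha^{2}+\pi^{2}$; using the factorization $4-2^{2/3}=2^{2/3}(2^{2/3}-1)(2^{2/3}+1)$, it rearranges to $\alpha^{2}\geq \pi^{2}/[\,2^{2/3}(2^{2/3}+1)\,]$, which is exactly $\alpha\geq\alpha_{-}$ with $\alpha_{-}=\pi/[\,2^{1/3}(2^{2/3}+1)^{1/2}\,]$.

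For the upper bound, let $\alpha_{+}=2\alpha_{-}$, which simplifies to $2^{2/3}\pi/(1+2^{2/3})^{1/2}$. Equality holds in the doubling inequality at $\alpha_{-}$, so $f(\alpha_{-})=f(\alpha_{+})$; since $\pi/\sqrt{2}\in(\alpha_{-},\alpha_{+})$ is the unique minimum of $f$ and $f$ is decreasing then increasing, every $x\in(\alpha_{-},\alpha_{+})$ satisfies $f(x)<f(\alpha_{+})$. The hypothesis $L_{1}>2^{1/3}(1+2^{2/3})^{1/2}$ is equivalent to $\pi/L_{1}<\alpha_{+}-\alpha_{-}=\alpha_{-}$, so the spacing of the subfamily $\{n\pi/L_{1}\}_{n\geq 1}\subset A$ (from the indices $(n,0,1)$) is strictly less than the length of $(\alpha_{-},\alpha_{+})$. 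Hence some $n\geq 1$ satisfies $n\pi/L_{1}\in(\alpha_{-},\alpha_{+})$; by minimality $f(\alpha)\leq f(n\pi/L_{1})<f(\alpha_{+})$, which together with $\alpha\geq\alpha_{-}$ and the monotonicity of $f$ on $[\pi/\sqrt{2},\infty)$ forces $\alpha<\alpha_{+}$.

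The asymptotic $\alpha\to\pi/\sqrt{2}$ as $L_{1}\to\infty$ is then routine: take $n=n(L_{1})$ the nearest integer to $L_{1}/\sqrt{2}$, so that $n\pi/L_{1}\to\pi/\sqrt{2}$, and apply the squeeze $f(\pi/\sqrt{2})\leq f(\alpha)\leq f(n\pi/L_{1})\to f(\pi/\sqrt{2})$, together with continuity and the unique-minimum property of $f$. The one delicate step is the upper bound: a comparison with any single fixed candidate like $(2,0,1)$ fails once $L_{1}$ is large, so one is forced into the spacing argument above, and it is precisely the algebraic coincidence $\alpha_{+}=2\alpha_{-}$ (itself coming from the doubling trick) that makes the threshold in the hypothesis come out exactly as stated.
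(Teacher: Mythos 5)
Your proof is correct, and it takes a genuinely different route from the paper's. The paper does not argue from scratch: it imports from Sengul and Wang \cite{Sengul2011} the interval bound $m\pi/L(m)<\alpha<m\pi/L(m-1)$ valid for $L(m-1)<L_1<L(m)$, where $L(m)=((m+1)m)^{1/3}((m+1)^{2/3}+m^{2/3})^{1/2}$, and then simply specializes to $m=1$ for the two explicit bounds and lets $m\to\infty$ for the asymptotic. Your argument is self-contained and rests on two observations the paper does not use in this form: (i) the \emph{doubling trick} --- closure of the admissible set $A$ under $\alpha\mapsto 2\alpha$ forces $f(\alpha)\le f(2\alpha)$ at any minimizer, which after the cube-root manipulation and the factorization $4-2^{2/3}=2^{2/3}(2^{2/3}-1)(2^{2/3}+1)$ gives exactly $\alpha\ge\alpha_-$; and (ii) the \emph{spacing argument} --- since $f(\alpha_-)=f(\alpha_+)$ with $\alpha_+=2\alpha_-$ and $f$ is unimodal with minimum at $\pi/\sqrt2\in(\alpha_-,\alpha_+)$, the hypothesis $L_1>L(1)$ makes the gap $\pi/L_1$ of the roll subfamily smaller than $|(\alpha_-,\alpha_+)|=\alpha_-$, so some candidate lands strictly inside and beats every point outside. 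I checked the algebra (the reduction of $f(\alpha)\le f(2\alpha)$ to $2^{2/3}(\alpha^2+\pi^2)\le 4\alpha^2+\pi^2$, the identification $\alpha_+=2\alpha_-$, and the placement of a term of the progression in the open interval) and it all goes through; the squeeze for the limit is also fine since $\alpha$ is eventually confined to the compact set $[\alpha_-,\alpha_+]$ and $f$ has a unique minimizer. What your approach buys is worth noting: the doubling trick applies to an arbitrary index $(s_x,s_y)$, not just roll modes, so the lower bound genuinely holds for all $L_1,L_2$ with no smallness assumption on $L_2$, whereas the paper's route leans on a cited result whose derivation (and the caveat ``assuming $L_2$ is sufficiently small'' attached to the interpretation of $L(m)$) is not reproduced. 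The paper's route, on the other hand, gives the finer $m$-dependent localization \eqref{wavenumberest} of $\alpha$ for every window $L(m-1)<L_1<L(m)$, which your argument does not recover and which is what Figure \ref{wavenumberestimate} actually plots.
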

\begin{proof}
To estimate the dependence of the wave number $\alpha$ on the length scales $L_1$, $L_2$, we define:
$$L(m)=((m+1)m)^{1/3}((m+1)^{2/3}+m^{2/3})^{1/2}, \quad m\in\mathbb{Z},m\geq 0.$$
The sequence $L(m)$ gives those length scales of $L_1$ for which the wave index changes assuming $L_2$ is sufficiently small.
As shown in Sengul and Wang \cite{Sengul2011}, when $L(m-1)<L_1<L(m)$ for some $m\geq1$, we have the following bound on the critical wave number:
\begin{equation}\label{wavenumberest}
\frac{m \pi}{L(m)}< \alpha < \frac{m \pi}{L(m-1)}.
\end{equation}
In particular,
\[
\alpha\geq \frac{\pi}{L(1)}=\frac{\pi}{2^{1/3}(2^{2/3}+1)^{1/2}},
\]
and
\[
\alpha \leq \frac{2\pi}{L(1)}=\frac{2^{2/3}\pi}{\sqrt{1+2^{2/3}}}, \quad \text{if }L_1>L(1).
\]
Finally noticing that,
\[\frac{m\pi}{L(m)}\rightarrow \frac{\pi}{\sqrt{2}}, \quad \frac{(m+1)\pi}{L(m)}\rightarrow \frac{\pi}{\sqrt{2}}, \quad \text{as } m\rightarrow \infty,\]
we find that 
\[\alpha\rightarrow\frac{\pi}{\sqrt{2}}, \quad L_1\rightarrow\infty. \]
\end{proof}
The bounds on the critical wave number as a function of the length scale $L_1$ which is obtained from \eqref{wavenumberest} is shown in Figure \ref{wavenumberestimate}.

\begin{figure}
\includegraphics[scale=.8]{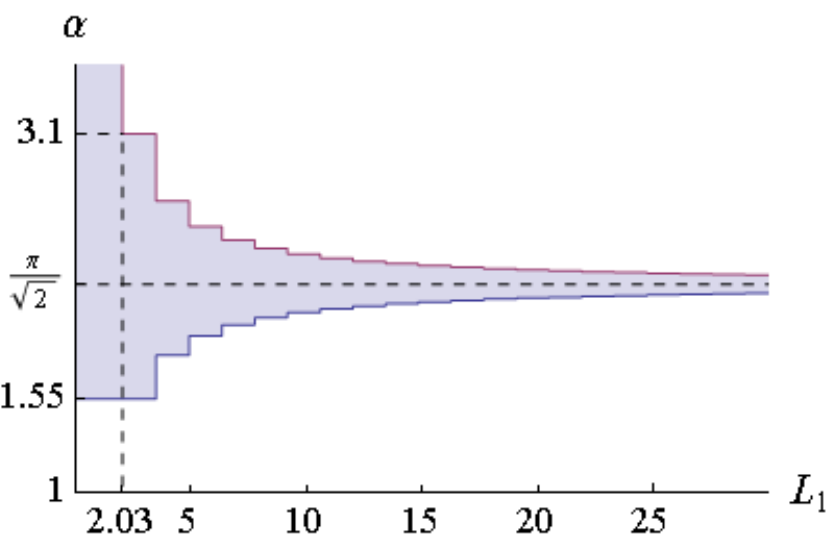}
\caption{The shaded region shows the possible values for the critical wave number $\alpha$ for a given $L_1$.\label{wavenumberestimate}}
\end{figure}

\section{Dynamic Transitions and Pattern Selection}
We study the case where two eigenvalues with indices $I=(i_x,i_y,1)$ and $J=(j_x,j_y,1)$ are the first critical eigenvalues. This means that $\alpha_I$ and $\alpha_J$ minimize \eqref{Rr}, thus the PES conditions \eqref{PES1}, \eqref{PES2} are satisfied with $\mathcal{C}=\{I, J\}$. The crucial assumption is that the corresponding wave numbers are equal, i.e. 
\[\alpha=\alpha_I=\alpha_J.\] 
Since $I\neq J$, without loss of generality, we can assume that $i_x>j_x$ which ensures that $i_y<j_y$. By \eqref{wavenumbers}, we must have the following linear relation between the length scales:
\[
L_1=\sqrt{\frac{i_x^2-j_x^2}{j_y^2-i_y^2}}L_2. 
\]
Thus two critical eigenmodes are possible only when $L_1$ and $L_2$ lie on a line emanating from the origin in Figure~\ref{wavemap1}.
There are three possible cases depending on the structure of the critical eigenmodes, which are completely described by our main theorems. \\
 (a) a rectangle and a roll mode respectively (described by Theorem \ref{roll vs rec thm}),\\ 
 (b) both  roll modes (described by Theorem \ref{roll vs roll thm}),\\
  (c) both rectangle modes (described by Theorem \ref{rec vs rec thm}).\\
 These possible cases are illustrated by Figure \ref{wavemap1} in the small length scale regime.
\begin{figure}
\includegraphics[scale=.5]{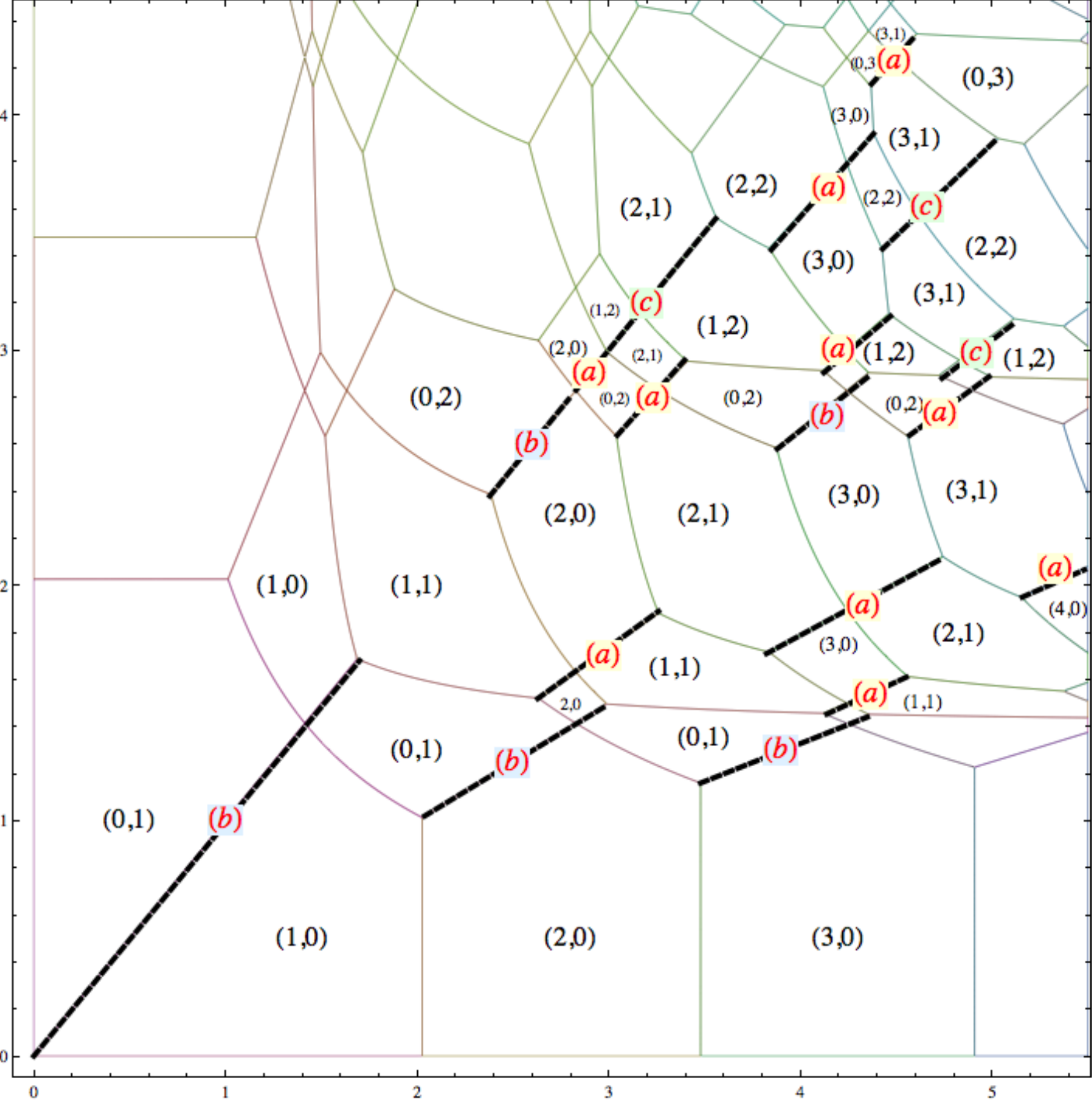}
\caption{The first two critical modes with (a) a rectangle and a roll pattern, (b) both roll patterns, (c) both rectangle patterns.\label{wavemap1}}
\end{figure}

Before presenting our results, we first summarize some of the known results which applies for the above setting; see Ma and Wang \cite{Ma2004a, Ma2007}:

\begin{itemize}
\item[i)] As the Rayleigh number $R$ crosses $R_c$, the system undergoes a Type-I (continuous) transition.
\item[ii)] There is an attractor $\Sigma_R$ bifurcated on $R>R_c$ such that for any $\phi_0\in H\setminus\Gamma$,
\[ dist(\phi_0,\Sigma_R)\rightarrow 0, \quad \text{as } t\rightarrow \infty,\] 
where $\Gamma$ is the stable manifold of $\phi=0$ with $codim=2$.
\item[iii)] $\Sigma_R$ is homeomorphic to $S^1$ and comprises steady states and the heteroclinic orbits connecting these steady states.
\item[iv)] There are four or eight bifurcated steady states. Half of the bifurcated steady states are minimal attractors and the rest are saddle points.
\end{itemize}
The dynamic transitions depend on the following \emph{positive} parameter which in turn is a function of the parameters $\Pr$, $L_1$ and $L_2$:
\begin{equation}\label{kappa}
\kappa_{S} =
\left\{\begin{array}{cc}
8\Pr^2 \alpha^2, & S=(0,0,2), \\
 \frac{\pi^2(4\alpha^2-\alpha_S^2)^2}{(R_S-R_c)\alpha^4}\left(\frac{\gamma_S^2\gamma^8}{R_c\alpha^2}+2\Pr\,\gamma^4+\Pr^2\frac{R_S\alpha^2}{\gamma_S^2}\right), & S\neq (0,0,2).
\end{array} \right.
\end{equation}
Here, $\alpha$ is the critical wave number, $\gamma^2=\alpha^2+\pi^2$ and $R_c=\gamma^6/\alpha^2$ is the critical Rayleigh number. Moreover, for $S=(s_x,s_y,s_z)$, $(s_x,s_y)\neq(0,0)$:
\[ \alpha_S^2= \frac{s_x^2 \pi^2}{L_1^2}+\frac{s_y^2\pi^2}{L_2^2},  \quad \gamma_S^2=\alpha_S^2+s_z^2 \pi^2, \quad R_S=\frac{\gamma_S^6}{\alpha_S^2}.\]
Also we let
\[g=\frac{\Pr \alpha^2}{(\Pr+1)\gamma^4}.\]
\subsection{One of the critical modes is a roll, the other is a rectangle.}
We first consider the case where an eigenmode with a roll structure and an eigenmode with a rectangle structure are the first critical eigenmodes. 
 \begin{figure}
\centering
\subfigure[When $a<c$. ]{
\includegraphics[scale=0.47]{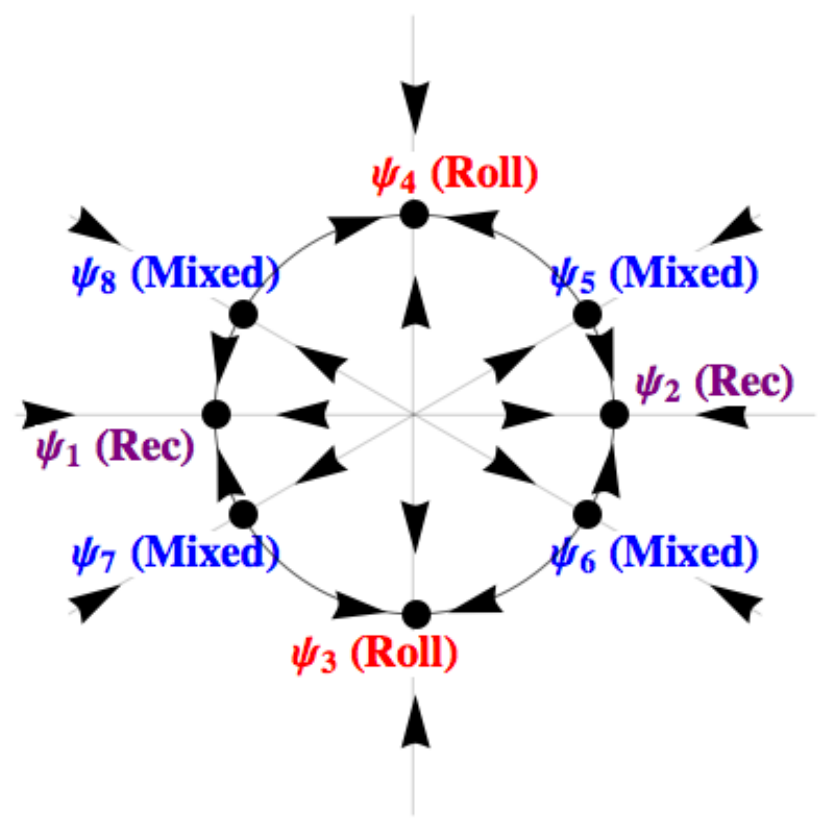}
\label{ac}
}
\subfigure[When $c<a$.]{
\includegraphics[scale=0.47]{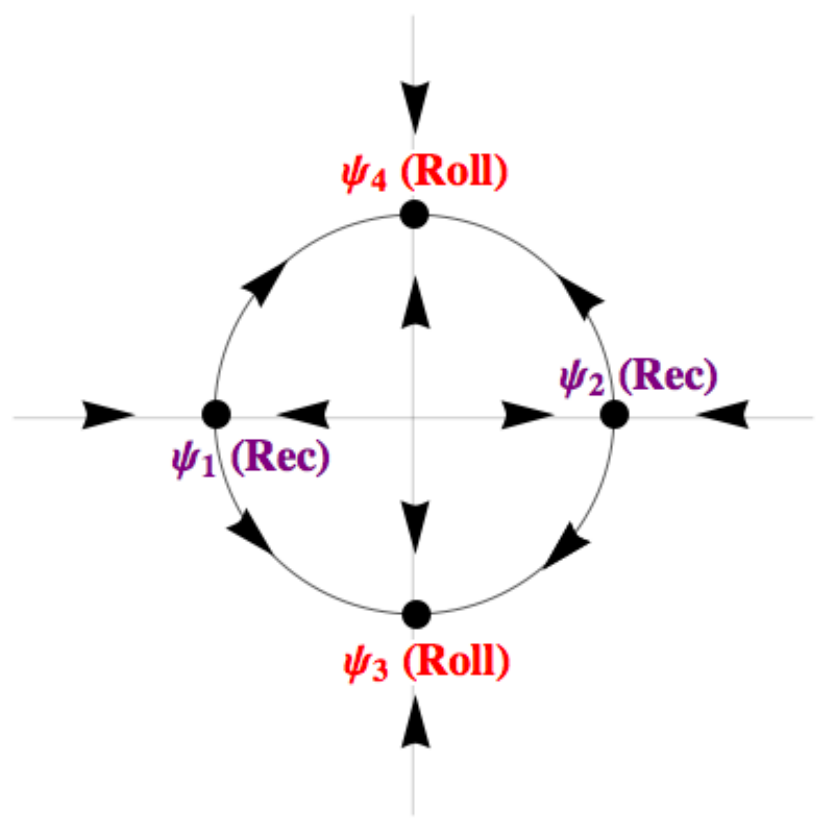}
\label{ca}
}
\caption{The structure of the attractor after the transition $R>R_c$ for different parameter regions when the first two critical modes are a roll and a rectangle. \label{regions}}
\end{figure}

\begin{theorem} \label{roll vs rec thm}
Assume that $I=(i_x,i_y,1)$ and $J=(0,j_y,1)$, ($i_x\geq1$, $j_y>i_y\geq1$) are the first critical indices with identical wave numbers, $\alpha_I=\alpha_J$. Consider the following numbers:
\begin{equation}\label{definition abc}
\begin{aligned}
& a= \kappa_{0,0,2}+\kappa_{2i_x,0,2}+\kappa_{0,2i_y,2},\\
& b= \kappa_{0,0,2},\\
& c = \kappa_{0,0,2}+2\kappa_{i_x,i_y+j_y,2}+2\kappa_{i_x,-i_y+j_y,2}.\\
\end{aligned}
\end{equation}
For $R>R_c$, let us define the following steady state solutions:
\[
\psi_i=g\sqrt{R-R_c}(X_i \phi_I+Y_i \phi_J)+o((R-R_c)^{1/2}), \quad i=1,\dots,8,
\]
where 
\begin{align*}
& X_i=(-1)^{i}a^{-1/2},  && Y_i=0, &&& i=1,2,  &&&&\text{(rectangle pattern)}\\
& X_i=0, && Y_i=(-1)^i(2b)^{-1/2}, &&& i=3,4, &&&&\text{(roll pattern)} \\
& X_i=\sqrt{\frac{c-b}{c^2-a b}}, && Y_i=(-1)^i\sqrt{\frac{c-a}{2(c^2-ab)}}, &&&i=5,6,  &&&&\text{(mixed pattern)}\\
& X_i=-\sqrt{\frac{c-b}{c^2-a b}}, && Y_i=(-1)^i\sqrt{\frac{c-a}{2(c^2-ab)}}, &&&i=7,8,  &&&&\text{(mixed pattern)}
\end{align*}
There are two possible transition scenarios:
\begin{itemize}
\item[i)] If $a<c$ then the topological structure of the system after the transition is as in Figure \ref{ac}. In particular:
\begin{itemize}
\item[1)] $\Sigma_R$ contains eight steady states $\psi_i$, $i=1,\dots,8$. 
\item[2)] $\psi_1$, $\psi_2$ (rectangles) and $\psi_3$, $\psi_4$ (rolls) are minimal attractors of $\Sigma_R$, $\psi_5$, $\psi_6$, $\psi_7$, $\psi_8$ (mixed) are unstable. 
\item[3)] There is a neighborhood $\mathcal{U}\setminus\Gamma$ of $0$ where $\Gamma$ is the stable manifold of $0$ such that $\bar{\mathcal{U}}=\cup_{i=1}^4 \bar{\mathcal{U}_i}$ with $\mathcal{U}_i$ pairwise disjoint and $\mathcal{U}_i$ is the basin of attraction of $\psi_i$, $i=1,\dots,4$.
\item[4)]  The projection of $\mathcal{U}_i$ onto the space spanned by $\phi_I$, $\phi_J$ is approximately a sectorial region given by:\[ \mathcal{U}_i\cap\{X\phi_I+Y\phi_J\mid \omega_{i,1}<arg(X,Y)<\omega_{i,2}\}, \quad i=1,\dots,4,\]
\begin{align*}
& \omega_{1,1}=\pi-\omega, &&\omega_{1,2}=\pi+\omega, &&\omega_{2,1}=-\omega, &&\omega_{2,2}=\omega, \\
&  \omega_{3,1}=\pi+\omega, &&\omega_{3,2}=2\pi-\omega, &&\omega_{4,1}=\omega,  &&\omega_{4,2}=\pi-\omega,
\end{align*}
where 
\begin{equation} \label{omega}
\omega=\arctan{\sqrt{\frac{c-a}{2(c-b)}}}.
\end{equation}
\end{itemize} 
\item[ii)] If $c<a$ then the topological structure of the system after the transition is as in Figure \ref{ca}. In particular:
\begin{itemize}
\item[a)] $\Sigma_R$ contains four steady states $\psi_i$, $i=1,\dots,4$.
\item[b)] $\psi_3$, $\psi_4$ (rolls) are minimal attractors of $\Sigma_R$ whereas the $\psi_1$, $\psi_2$ (rectangles) are unstable steady states. 
\item[c)] There is a neighborhood $\mathcal{U}\setminus\Gamma$ of $0$ where $\Gamma$ is the stable manifold of $0$ such that $\bar{\mathcal{U}}=\cup_{i=3}^4 \bar{\mathcal{U}_i}$ with $\mathcal{U}_i$ pairwise disjoint and $\mathcal{U}_i$ is the basin of attraction of $\psi_i$, $i=3,4$. 
\item[d)] The projection of $\mathcal{U}_i$ onto the space spanned by $\phi_I$, $\phi_J$ is a sectorial region given by:
\[ \mathcal{U}_i\cap\{X\phi_I+Y\phi_J\mid \omega_{i,1}<arg(X,Y)<\omega_{i,2}\}, \quad i=3,4,\]
\[
 \omega_{3,1}=\pi,\,  \omega_{3,2}=2\pi, \quad \omega_{4,1}=0,\, \omega_{4,2}=\pi. 
\]
\end{itemize}

\end{itemize}
\end{theorem}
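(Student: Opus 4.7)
The strategy is to apply the dynamic transition and attractor bifurcation machinery of Ma--Wang \cite{ptd} after reducing to a two-dimensional center manifold spanned by $\phi_I$ and $\phi_J$. I would write $\phi = x\phi_I + y\phi_J + \Phi(x,y,R)$ with $\Phi$ in the stable subspace, substitute into \eqref{functional}, and project using dual eigenfunctions to obtain a planar ODE for $(x,y)$. The PES conditions \eqref{PES1}--\eqref{PES2} guarantee the existence of the center manifold, and the general attractor bifurcation theorem (items (i)--(iv) at the start of Section 4) guarantees that $\Sigma_R$ is an $S^1$ composed of steady states joined by heteroclinic orbits. The nontrivial content is to pin down how many equilibria there are, which are the minimal attractors, and what their basins look like.

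Next I would compute $\Phi$ to leading quadratic order. Since $G$ is bilinear, the quadratic source splits as $x^2 G(\phi_I,\phi_I) + xy\,G_s(\phi_I,\phi_J) + y^2 G(\phi_J,\phi_J)$. Using product-to-sum identities on the trigonometric structure \eqref{sepofvar}, these sources excite only stable modes whose horizontal indices are sums or differences of those of $I,J$ and whose vertical index is $s_z=2$; for $I=(i_x,i_y,1)$ and $J=(0,j_y,1)$ the complete list is $(0,0,2)$, $(2i_x,0,2)$, $(0,2i_y,2)$, $(0,2j_y,2)$, $(i_x,\pm i_y+j_y,2)$, precisely the indices appearing in \eqref{definition abc}. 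For each such $S$, invert $L_{R_c}$ on the stable eigenspace to obtain $\Phi_S$; the positive constant $\kappa_S$ in \eqref{kappa} then records the feedback when $G_s(\phi_{\text{critical}},\Phi_S)$ is projected back onto $\phi_I,\phi_J$. Collecting terms and rescaling via $g\sqrt{R-R_c}$ yields the reduced cubic system
\[
\dot X = \sigma(R)\bigl(X - a X^3 - 2c X Y^2\bigr), \qquad \dot Y = \sigma(R)\bigl(Y - 2b Y^3 - c X^2 Y\bigr),
\]
with $\sigma(R)>0$ for $R>R_c$.

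The remaining step is the dynamical classification of this planar cubic system. The equations $aX^2 + 2cY^2 = 1$ and $cX^2 + 2bY^2 = 1$ produce pure rectangles $(\pm a^{-1/2},0)$, pure rolls $(0,\pm(2b)^{-1/2})$, and mixed states $X^2 = (c-b)/(c^2-ab)$, $Y^2=(c-a)/(2(c^2-ab))$. Positivity of every $\kappa_S$ forces $b<a$ and $b<c$ automatically, so the mixed equilibria exist precisely when $a<c$, which separates cases (i) and (ii). The Jacobian at each equilibrium gives at once: pure rolls are always minimal attractors, pure rectangles are stable iff $a<c$, and mixed equilibria when present are saddles (their Jacobian determinant is $8X^2Y^2(ab-c^2)<0$). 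For the basin structure I would exploit the $\mathbb{Z}_2\times\mathbb{Z}_2$ reflection symmetry $(X,Y)\mapsto(\pm X,\pm Y)$ to conclude that the stable manifolds of the four saddles meet the origin along the rays of angles $\pm\omega$ and $\pi\pm\omega$, where $\tan\omega = Y_*/X_* = \sqrt{(c-a)/(2(c-b))}$, reproducing \eqref{omega} and the sectorial decomposition; the attractor bifurcation theorem then transfers this picture to the full PDE, giving Figures \ref{ac} and \ref{ca}.

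The main obstacle will be the coefficient computation. Carrying out the Leray projection $\mathcal{P}$ on each quadratic source is delicate because $\mathcal{P}$ couples velocity components in a wave-number-dependent way; one must enumerate every surviving cross term in $G_s(\phi_I,\phi_J)$ and verify that inverting the non-self-adjoint $L_{R_c}$ on each stable mode indeed produces the closed form \eqref{kappa}. The computer algebra mentioned in the introduction is needed here precisely to keep track of cancellations and to certify the numerical factors in the $\kappa_S$.
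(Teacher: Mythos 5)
Your proposal follows essentially the same route as the paper: center-manifold reduction onto $\text{span}\{\phi_I,\phi_J\}$, computation of the quadratically excited $s_z=2$ modes, reduction to the planar cubic system $\dot x=\beta x-x(ax^2+2cy^2)$, $\dot y=\beta y-y(cx^2+2by^2)$, and classification of the equilibria via the Jacobian and the reflection symmetries; your steady-state formulas, the saddle determinant $8x^2y^2(ab-c^2)<0$ for the mixed states, and the angle $\omega=\arctan\sqrt{(c-a)/(2(c-b))}$ all agree with the paper's computations. The only cosmetic differences are that the paper expands the center manifold in the parameter-independent Stokes/Laplace eigenbasis $\{e_S^1,e_S^2\}$ (to avoid inverting the $R$- and $\Pr$-dependent eigenfunctions of $L_R$), and that your list of excited modes includes $(0,2j_y,2)$, whose amplitude in fact vanishes because the roll self-interaction feeds back only through the mean mode $(0,0,2)$ --- which is precisely why $b=\kappa_{0,0,2}$ alone.
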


\begin{remark}
In the special case $j_y=2i_y$, the mixed solution corresponds to a regular hexagonal pattern. In this case we find $a<c$, hence the first scenario in Theorem~\ref{roll vs rec thm} is valid; see Remark~\ref{Rem}.
\end{remark}

\subsection{The first two critical modes are both rolls.} In this section we consider two critical modes both having a roll structure. Under the assumption that the wave numbers are equal, one of the rolls has to be aligned in the x-direction and the other one aligned in the y-direction.

\begin{figure}
\includegraphics[scale=0.55]{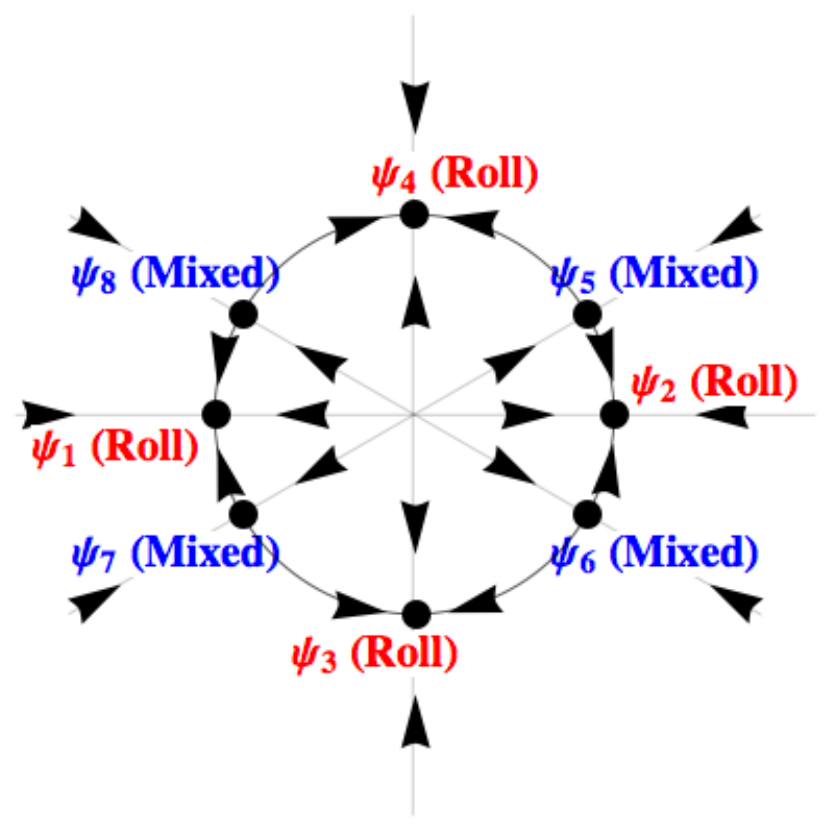}
\label{bd region}
\caption{The structure of the attractor after the transition $R>R_c$ when the first two critical modes are both roll type. \label{roll vs roll regions}}
\end{figure}

\begin{theorem} \label{roll vs roll thm}
Assume that $I=(i_x,0,1)$ and $J=(0,j_y,1)$ ($i_x\geq1$, $j_y\geq1$) are the first critical indices with identical wave numbers, $\alpha_I=\alpha_J$. Consider the following numbers:
\begin{equation} \label{definition bd}
\begin{aligned}
& b= 2\kappa_{0,0,2},\\
& d = 2\kappa_{0,0,2}+8\kappa_{i_x,j_y,2}.\\
\end{aligned}
\end{equation}
For $R>R_c$, we define:
\[
\psi_i=g\sqrt{R-R_c}(X_i \phi_I+Y_i \phi_J)+o((R-R_c)^{1/2}),\quad i=1,\dots,8,
\]
where
\begin{align*}
& X_i=(-1)^{i} b^{-1/2},  && Y_i=0, &&& i=1,2,  &&&&\text{(roll pattern)}\\
& X_i=0, && Y_i=(-1)^i b^{-1/2}, &&& i=3,4, &&&&\text{(roll pattern)} \\
& X_i=(b+d)^{-1/2}, && Y_i=(-1)^i X_i, &&&i=5,6,  &&&&\text{(mixed pattern)}\\
& X_i=-(b+d)^{-1/2}, && Y_i=(-1)^i X_i,  &&&i=7,8,  &&&&\text{(mixed pattern)}
\end{align*}
Then the topological structure of the system after the transition is as in Figure \ref{bd region}. In particular:
\begin{itemize}
\item[1)] $\Sigma_R$ contains eight steady states $\psi_i$, $i=1,\dots,8$. 
\item[2)] $\psi_1$, $\psi_2$, $\psi_3$, $\psi_4$ (rolls) are minimal attractors of $\Sigma_R$, $\psi_5$, $\psi_6$, $\psi_7$, $\psi_8$ (mixed) are unstable. 
\item[3)] There is a neighborhood $\mathcal{U}\setminus\Gamma$ of $0$ where $\Gamma$ is the stable manifold of $0$ such that $\bar{\mathcal{U}}=\cup_{i=1}^4 \bar{\mathcal{U}_i}$ with $\mathcal{U}_i$ pairwise disjoint and $\mathcal{U}_i$ is the basin of attraction of $\psi_i$, $i=1,\dots,4$.
\item[4)]  The projection of $\mathcal{U}_i$ onto the space spanned by $\phi_I$, $\phi_J$ is approximately a sectorial region given by:\[ \mathcal{U}_i\cap\{X\phi_I+Y\phi_J\mid \omega_{i,1}<arg(X,Y)<\omega_{i,2}\}, \quad i=1,\dots,4,\]
\begin{align*}
& \omega_{1,1}=3\pi/4, &&\omega_{1,2}=5\pi/4, &&\omega_{2,1}=-\pi/4, &&\omega_{2,2}=\pi/4, \\
&  \omega_{3,1}=5\pi/4, &&\omega_{3,2}=7\pi/4, &&\omega_{4,1}=\pi/4,  &&\omega_{4,2}=3\pi/4.
\end{align*}
\end{itemize} 
 \end{theorem}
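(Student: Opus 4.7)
\textbf{Proof plan for Theorem \ref{roll vs roll thm}.}

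The plan is to apply the dynamic transition / center manifold reduction of Ma--Wang \cite{ptd} to project the equation \eqref{functional} onto the two-dimensional unstable subspace $E_c=\operatorname{span}\{\phi_I,\phi_J\}$. Writing $\phi=x\phi_I+y\phi_J+\Phi(x,y,R)$, where $\Phi$ is the center manifold function lying in the stable subspace $E_s$, the reduced equations on $E_c$ take the form
\begin{equation*}
\dot x=\beta_I^1(R)\,x+\frac{1}{\langle\phi_I,\phi_I^*\rangle}\bigl\langle G(\phi),\phi_I^*\bigr\rangle,\qquad
\dot y=\beta_J^1(R)\,y+\frac{1}{\langle\phi_J,\phi_J^*\rangle}\bigl\langle G(\phi),\phi_J^*\bigr\rangle,
\end{equation*}
where $\phi_I^*,\phi_J^*$ are the corresponding conjugate eigenfunctions. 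Following Sengul--Wang \cite{Sengul2011}, I would not expand $\Phi$ in the full eigenbasis of $L_R$ but in the orthogonal Fourier/separation-of-variables basis \eqref{sepofvar}; this avoids having to know the eigenpairs explicitly and reduces the computation of $\Phi$ to linear algebra on each Fourier mode.

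Next, I would identify the quadratic self-interactions of $\phi_I$ and $\phi_J$ via the trilinear form $G_s$ in \eqref{bilinear}. Because $I=(i_x,0,1)$ and $J=(0,j_y,1)$ are both rolls aligned with coordinate axes, trigonometric product-to-sum identities show that the only non-critical modes excited at quadratic order lie over the horizontal wave vectors $(0,0)$ and $(i_x,j_y)$ with vertical index $s_z=2$; i.e., only $\kappa_{0,0,2}$ and $\kappa_{i_x,j_y,2}$ contribute. A Mathematica-assisted computation of the Leray-projected integrals $G_s(\phi_I,\phi_I,\cdot)$, $G_s(\phi_J,\phi_J,\cdot)$, $G_s(\phi_I,\phi_J,\cdot)$ against the relevant modes, combined with inverting $L_R$ on each Fourier block, yields $\Phi$ to leading order. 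Substituting $\Phi$ back and pairing with $\phi_I^*,\phi_J^*$ produces a reduced system whose cubic part is diagonal by the symmetry $(x,y)\mapsto(-x,y)$ and $(x,y)\mapsto(x,-y)$ (inherited from the reflection symmetries $x\mapsto L_1-x$, $y\mapsto L_2-y$ preserved by \eqref{main}--\eqref{bc}). This forces the reduced equations to the normal form
\begin{equation*}
\dot x=\beta\,x-g^{-2}\bigl(b\,x^2+d\,y^2\bigr)x+o(|x|^3),\qquad
\dot y=\beta\,y-g^{-2}\bigl(d\,x^2+b\,y^2\bigr)y+o(|y|^3),
\end{equation*}
with $\beta=\beta_I^1(R)=\beta_J^1(R)\sim g(R-R_c)$ and $b,d$ as in \eqref{definition bd}. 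Matching constants to the general formula \eqref{kappa} is the step most prone to bookkeeping error; I would verify it against the two known special cases and against the general expressions derived in \cite{Ma2004a,Ma2007}.

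Once the reduced system is in hand, the rest is elementary planar phase plane analysis. Setting the right-hand sides to zero and using $b,d>0$ (from \eqref{kappa}) gives exactly the eight equilibria $\psi_1,\dots,\psi_8$ listed in the theorem, with the amplitudes $X_i,Y_i$ read off directly. Linearization at each equilibrium classifies the pure roll states $\psi_1,\dots,\psi_4$ as stable nodes (both eigenvalues negative, using $d>0$) and the mixed states $\psi_5,\dots,\psi_8$ as saddles. The global structure $\Sigma_R\simeq S^1$ comes from the general Ma--Wang theory quoted just before the theorem, and the basin sectors $\omega_{i,1},\omega_{i,2}$ follow because the separatrices of the four saddles lie exactly along the diagonals $y=\pm x$ by the $(x,y)\leftrightarrow(y,x)$ symmetry specific to the roll/roll case.

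\textbf{Main obstacle.} The hard step will not be the phase-plane analysis but a careful, symmetry-based argument that at cubic order \emph{only} the modes $(0,0,2)$ and $(i_x,j_y,2)$ feed back into $E_c$, together with the algebraic verification that the resulting coefficients agree exactly with \eqref{kappa} and \eqref{definition bd}; this is where the Mathematica-assisted bookkeeping of the trilinear interactions is indispensable and where a sign or factor-of-two error could silently change the stability conclusions.
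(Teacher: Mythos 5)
Your plan follows essentially the same route as the paper: a center manifold reduction in the Stokes/Laplace separation-of-variables basis, the observation that only the modes $(0,0,2)$ and $(i_x,j_y,2)$ are excited at quadratic order, reduction to the symmetric cubic normal form with coefficients $b,d$, and a phase-plane classification of the eight equilibria with separatrices along $y=\pm x$. The only point to tighten is your stability criterion for the pure rolls: the transverse eigenvalue at $\psi_1,\dots,\psi_4$ is proportional to $(b-d)/(2b)$, so what you need is $b<d$ (which holds because $d-b=8\kappa_{i_x,j_y,2}>0$), not merely $d>0$.
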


\subsection{The first two critical modes are both rectangles.} In this section we consider two critical modes both having a rectangular pattern with equal wave numbers, $\alpha_I=\alpha_J$.

\begin{figure}
\centering
\subfigure[When $a<e$ and $f<e$.]{
\includegraphics[scale=0.55]{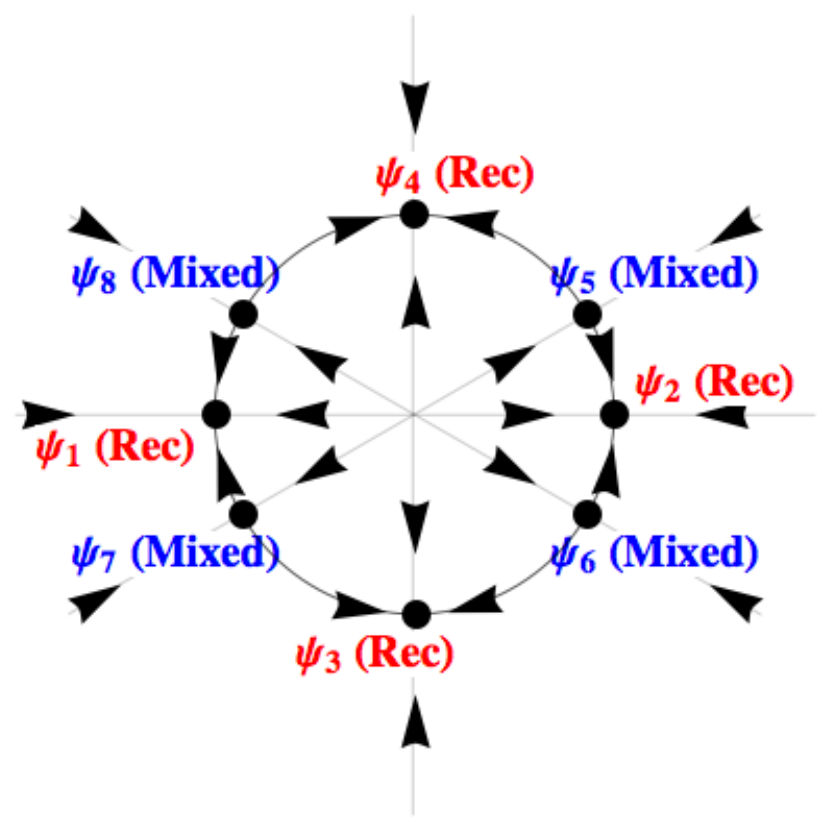}
\label{afe region}
}
\subfigure[When $e<a$ and $e<f$.]{
\includegraphics[scale=0.55]{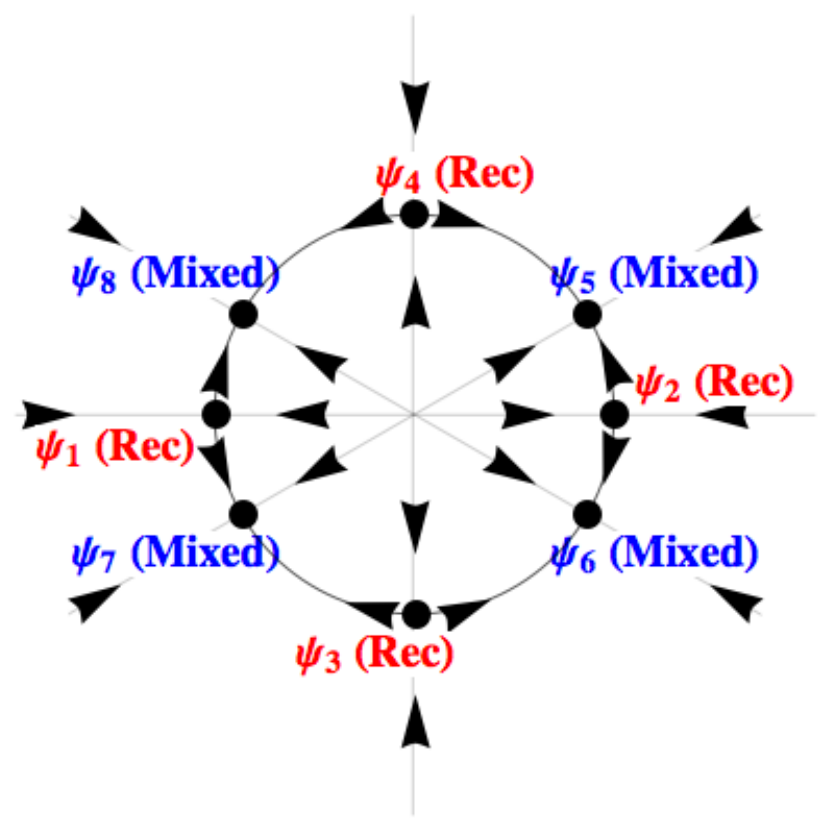}
\label{eaf region}
}
\subfigure[When $f<e<a$.]{
\includegraphics[scale=0.55]{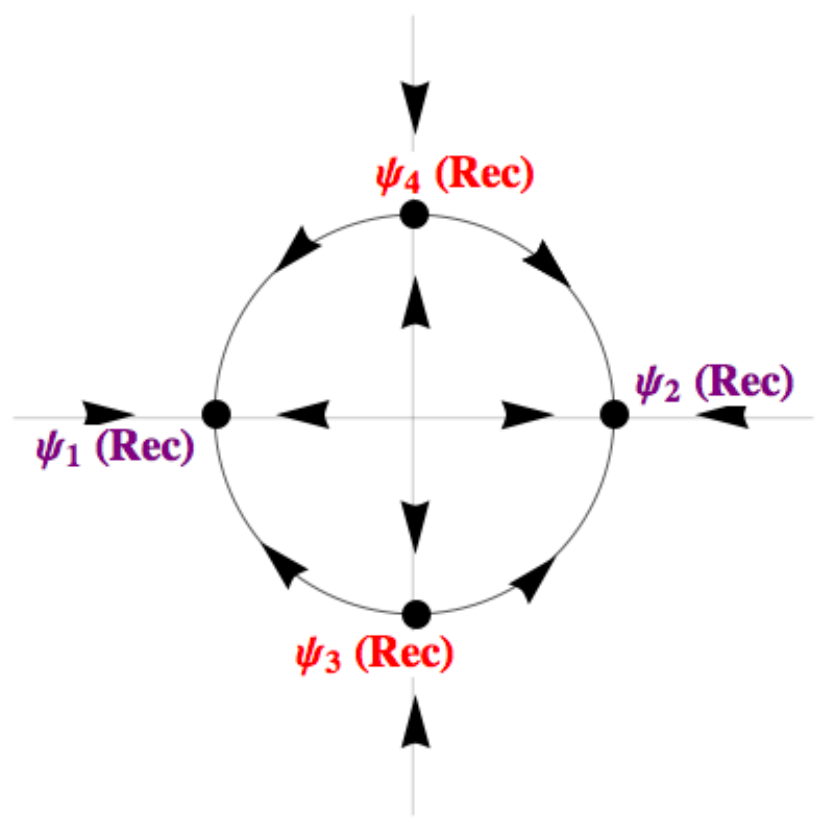}
\label{fea region}
}
\subfigure[When $a<e<f$.]{
\includegraphics[scale=0.55]{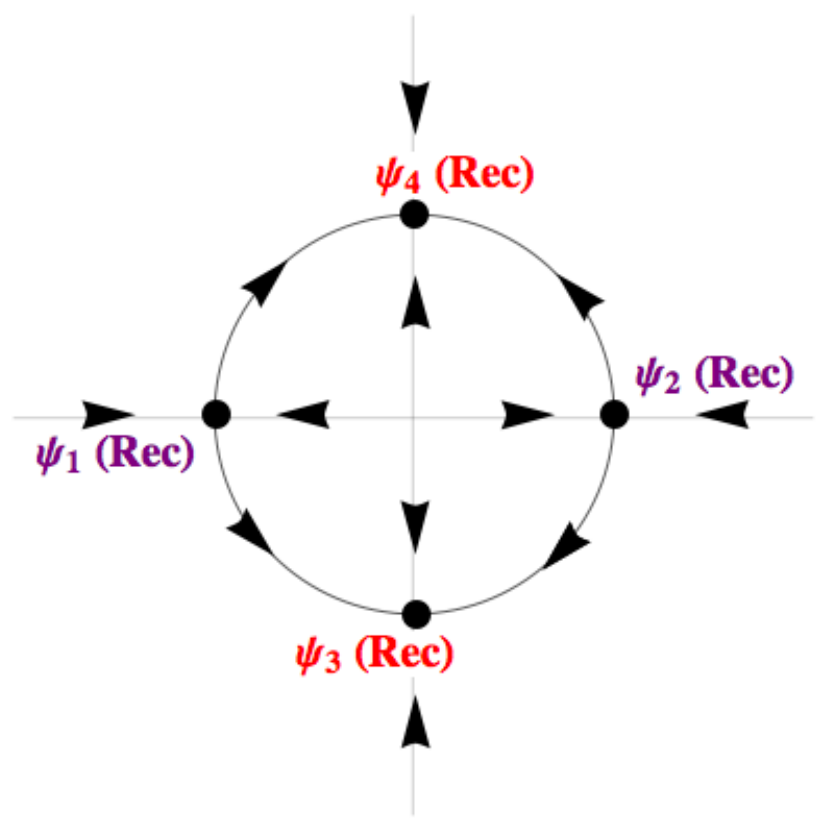}
\label{aef region}
}
\caption{The structure of the attractor after the transition $R>R_c$ for different parameter regions when the first two critical modes are both rectangle type. \label{rec vs rec regions}}
\end{figure}

\begin{theorem} \label{rec vs rec thm}
Assume that $I=(i_x,i_y,1)$, $J=(j_x,j_y,1)$ ($i_x\neq0$, $i_y\neq0$, $j_x\neq0$, $j_y\neq0$, $i_x\neq j_x$, $i_y\neq j_y$) are the first critical indices with identical wave numbers, $\alpha_I=\alpha_J$. Consider the following numbers:
\begin{equation} \label{definition aef}
\begin{aligned}
& a= \kappa_{0,0,2}+\kappa_{2i_x,0,2}+\kappa_{0,2i_y,2},\\
& e= \kappa_{0,0,2}+\kappa_{i_x+j_x,i_y+j_y,2}+\kappa_{i_x-j_x,i_y+j_y,2}+\kappa_{i_x+j_x,-i_y+j_y,2}+\kappa_{i_x-j_x,-i_y+j_y,2},\\
& f= \kappa_{0,0,2}+\kappa_{2j_x,0,2}+\kappa_{0,2j_y,2}.
\end{aligned}
\end{equation}
For $R>R_c$, let us define the following steady state solutions:
\begin{equation*}
\psi_i=g\sqrt{R-R_c}(X_i \phi_I+Y_i \phi_J)+o((R-R_c)^{1/2}), \quad i=1,\dots,8,
\end{equation*}
where
\begin{align*}
& X_i=(-1)^{i}a^{-1/2},  && Y_i=0, &&& i=1,2,  &&&&\text{(rectangle pattern)}\\
& X_i=0, && Y_i=(-1)^i f^{-1/2}, &&& i=3,4, &&&&\text{(rectangle pattern)} \\
& X_i=\sqrt{\frac{e-f}{e^2-a f}}, && Y_i=(-1)^i\sqrt{\frac{e-a}{2(e^2-af)}}, &&&i=5,6,  &&&&\text{(mixed pattern)}\\
& X_i=-\sqrt{\frac{e-f}{e^2-a f}}, && Y_i=(-1)^i\sqrt{\frac{e-a}{2(e^2-af)}}, &&&i=7,8,  &&&&\text{(mixed pattern)}
\end{align*}
There are four possible transition scenarios:
\begin{itemize}
\item[i)] If $a<e$ and $f<e$ then the topological structure of the system after the transition is as in Figure \ref{afe region}. In particular:
\begin{itemize}
\item[1)] $\Sigma_R$ contains eight steady states $\psi_i$, $i=1,\dots,8$. 
\item[2)] $\psi_1$, $\psi_2$, $\psi_3$, $\psi_4$ (rectangles) are minimal attractors of $\Sigma_R$, $\psi_5$, $\psi_6$, $\psi_7$, $\psi_8$ (mixed) are unstable. 
\item[3)] There is a neighborhood $\mathcal{U}\setminus\Gamma$ of $0$ where $\Gamma$ is the stable manifold of $0$ such that $\bar{\mathcal{U}}=\cup_{i=1}^4 \bar{\mathcal{U}_i}$ with $\mathcal{U}_i$ pairwise disjoint and $\mathcal{U}_i$ is the basin of attraction of $\psi_i$, $i=1,\dots,4$.
\item[4)]  The projection of $\mathcal{U}_i$ onto the space spanned by $\phi_I$, $\phi_J$ is approximately a sectorial region given by:\[ \mathcal{U}_i\cap\{X\phi_I+Y\phi_J\mid \omega_{i,1}<arg(X,Y)<\omega_{i,2}\}, \quad i=1,\dots,4,\]
\begin{align*}
& \omega_{1,1}=\pi-\omega, &&\omega_{1,2}=\pi+\omega, &&\omega_{2,1}=-\omega, &&\omega_{2,2}=\omega, \\
&  \omega_{3,1}=\pi+\omega, &&\omega_{3,2}=2\pi-\omega, &&\omega_{4,1}=\omega,  &&\omega_{4,2}=\pi-\omega,
\end{align*}
where $\omega=\arctan{\sqrt{\frac{e-a}{e-f}}}$.
\end{itemize} 
\item[ii)] If $e<a$ and $e<f$ then the topological structure of the system after the transition is as in Figure \ref{eaf region}. In particular: 
\begin{itemize}
\item[a)] $\Sigma_R$ contains eight steady states $\psi_i$, $i=1,\dots,8$. 
\item[b)] $\psi_5$, $\psi_6$, $\psi_7$, $\psi_8$ (mixed) are minimal attractors of $\Sigma_R$ whereas $\psi_1$, $\psi_2$, $\psi_3$, $\psi_4$ (rectangles) are unstable.
\item[c)] There is a neighborhood $\mathcal{U}\setminus\Gamma$ of $0$ where $\Gamma$ is the stable manifold of $0$ such that $\bar{\mathcal{U}}=\cup_{i=5}^8 \bar{\mathcal{U}_i}$ with $\mathcal{U}_i$ pairwise disjoint and $\mathcal{U}_i$ is the basin of attraction of $\psi_i$, $i=5,\dots,8$. 
\item[d)] The projection of $\mathcal{U}_i$ onto the space spanned by $\phi_I$, $\phi_J$ is a sectorial region given by:
\[ \mathcal{U}_i\cap\{X\phi_I+Y\phi_J\mid \omega_{i,1}<arg(X,Y)<\omega_{i,2}\}, \quad i=5,\dots,8,\]
\begin{align*}
& \omega_{5,1}=0,&&\omega_{5,2}=\pi/2, \quad &&\omega_{6,1}=3\pi/2, && \omega_{6,2}=2\pi, \\
&  \omega_{7,1}=\pi,&&  \omega_{7,2}=3\pi/2, &&\omega_{8,1}=\pi/2,&& \omega_{8,2}=\pi.
\end{align*}

\end{itemize} 
\item[iii)] If $f<e<a$ then the topological structure of the system after the transition is as in Figure \ref{fea region}. In particular:
\begin{itemize}
\item[a)] $\Sigma_R$ contains four steady states $\psi_i$, $i=1,\dots,4$.
\item[b)] $\psi_1$, $\psi_2$ (rectangles) are minimal attractors of $\Sigma_R$ whereas $\psi_3$, $\psi_4$ (rectangles) are unstable steady states. 
\item[c)] There is a neighborhood $\mathcal{U}\setminus\Gamma$ of $0$ where $\Gamma$ is the stable manifold of $0$ such that $\bar{\mathcal{U}}=\cup_{i=1}^2 \bar{\mathcal{U}_i}$ with $\mathcal{U}_i$ pairwise disjoint and $\mathcal{U}_i$ is the basin of attraction of $\psi_i$, $i=1,2$. 
\item[d)] The projection of $\mathcal{U}_i$ onto the space spanned by $\phi_I$, $\phi_J$ is a sectorial region given by:
\[ \mathcal{U}_i\cap\{X\phi_I+Y\phi_J\mid \omega_{i,1}<arg(X,Y)<\omega_{i,2}\}, \quad i=1,2,\]
\[
\omega_{1,1}=\pi/2,\,  \omega_{1,2}=3\pi/2, \quad \omega_{2,1}=-\pi/2,\, \omega_{2,2}=\pi/2. 
\]
\end{itemize}
\item[iv)] If $a<e<f$ then the topological structure of the system after the transition is as in Figure \ref{aef region}. In particular:
\begin{itemize}
\item[a)] $\Sigma_R$ contains four steady states $\psi_i$, $i=1,\dots,4$.
\item[b)] $\psi_3$, $\psi_4$ (rectangles) are minimal attractors of $\Sigma_R$ whereas the $\psi_1$, $\psi_2$ (rectangles) are unstable steady states. 
\item[c)] There is a neighborhood $\mathcal{U}\setminus\Gamma$ of $0$ where $\Gamma$ is the stable manifold of $0$ such that $\bar{\mathcal{U}}=\cup_{i=3}^4 \bar{\mathcal{U}_i}$ with $\mathcal{U}_i$ pairwise disjoint and $\mathcal{U}_i$ is the basin of attraction of $\psi_i$, $i=3,4$. 
\item[d)] The projection of $\mathcal{U}_i$ onto the space spanned by $\phi_I$, $\phi_J$ is a sectorial region given by:
\[ \mathcal{U}_i\cap\{X\phi_I+Y\phi_J\mid \omega_{i,1}<arg(X,Y)<\omega_{i,2}\}, \quad i=3,4,\]
\[
 \omega_{3,1}=\pi,\,  \omega_{3,2}=2\pi, \quad \omega_{4,1}=0,\, \omega_{4,2}=\pi. 
\]
\end{itemize}
\end{itemize} \end{theorem}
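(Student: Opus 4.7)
The plan is to follow the standard Ma--Wang dynamic transition framework, reducing the infinite-dimensional system \eqref{functional} to the two-dimensional center manifold spanned by the critical modes $\phi_I$, $\phi_J$ and then analyzing the resulting planar vector field. By the PES conditions \eqref{PES1}--\eqref{PES2}, the unstable subspace at $R = R_c$ is exactly $\mathrm{span}\{\phi_I, \phi_J\}$, and items (i)--(iv) of the known results already guarantee a Type-I transition to an $S^1$ attractor consisting of four or eight steady states. The remaining tasks are to identify these steady states, their stability types, and their basins of attraction, all of which will follow from an explicit two-dimensional amplitude equation.

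First I would write $\phi = X\phi_I + Y\phi_J + \Phi(X,Y,R)$ with $\Phi$ in the center-manifold complement and expand $\Phi$ as the leading quadratic correction. The key point is that by Fourier orthogonality of the basis \eqref{sepofvar}, the quadratic interactions $G_s(\phi_I,\phi_I)$, $G_s(\phi_I,\phi_J)$, $G_s(\phi_J,\phi_J)$ project onto exactly the non-critical Fourier sectors with indices $(0,0,2)$, $(2i_x,0,2)$, $(0,2i_y,2)$, $(i_x\pm j_x, i_y\pm j_y, 2)$, $(2j_x,0,2)$ and $(0,2j_y,2)$, so in particular they have no component along $\phi_I$ or $\phi_J$ and the reduced equations are cubic to leading order. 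Inverting $L_{R_c}$ mode by mode and then projecting the cubic term $G_s(X\phi_I+Y\phi_J, \Phi)$ back onto $\phi_I$, $\phi_J$ produces exactly the coefficient combinations $a$, $e$, $f$ displayed in \eqref{definition aef}, yielding the planar reduced system
\begin{equation*}
\dot X = \sigma X - (aX^2 + eY^2)X, \qquad \dot Y = \sigma Y - (eX^2 + fY^2)Y,
\end{equation*}
with $\sigma$ a positive multiple of $R - R_c$ and the scaling $g\sqrt{R - R_c}$ absorbed into the amplitudes.

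Once this reduced system is available, the remainder is planar phase-plane analysis. The equilibria split into three families: the pure-$I$ pair $(\pm\sqrt{\sigma/a},0)$, the pure-$J$ pair $(0,\pm\sqrt{\sigma/f})$, and the mixed quartet from $aX^2 + eY^2 = eX^2 + fY^2 = \sigma$, which gives $X^2 = \sigma(e-f)/(e^2-af)$ and $Y^2 = \sigma(e-a)/(e^2-af)$ and exists only when $e-a$ and $e-f$ share a sign. The Jacobian at the pure-$I$ equilibrium has eigenvalues $-2\sigma$ and $\sigma(1-e/a)$, symmetrically for pure-$J$, while each mixed equilibrium has one eigenvalue $-2\sigma$ and a second eigenvalue whose sign matches that of $(e-a)(e-f)/(e^2-af)$. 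Running through the four sign configurations of $(a-e, f-e)$ in these formulas yields exactly the classification in cases (i)--(iv). Basins of attraction follow from the $\mathbb{Z}_2 \times \mathbb{Z}_2$ symmetry $(X,Y) \mapsto (\pm X, \pm Y)$, which makes both coordinate axes invariant: in (iii) and (iv) the axes themselves already split the plane into the two half-plane basins of the stable axial equilibria, whereas in (i) and (ii) the four saddle separatrices are the rays through the origin whose angles are found by plugging the saddle coordinates into $\arctan(|Y|/|X|)$, giving $\omega = \arctan\sqrt{(e-a)/(e-f)}$ as stated.

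The main obstacle lies in the coefficient identification of the second step, and it is precisely here that the conclusion rests on computational evidence in the rectangle-versus-rectangle regime. Rectangular critical modes generate, through the trigonometric products in $G_s$, substantially more cross terms than roll modes, and one must check by symbolic computation (along the lines of the Mathematica routine used in Sengul--Wang \cite{Sengul2011} and invoked again in Section~5) that no further non-critical Fourier sectors beyond those listed contribute resonant terms, and that the summed weights collapse exactly to the expressions $a$, $e$, $f$ in \eqref{definition aef} via the formula \eqref{kappa}. Once this identification is granted, the stability analysis and the sectorial basin descriptions in each of the four scenarios follow from the elementary planar analysis sketched above, and the Ma--Wang attractor theorem lifts the planar picture through the center manifold back to the full infinite-dimensional flow.
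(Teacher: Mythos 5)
Your proposal follows essentially the same route as the paper: a center-manifold reduction onto $\mathrm{span}\{\phi_I,\phi_J\}$, identification of the contributing non-critical Fourier sectors $(0,0,2)$, $(2i_x,0,2)$, $(0,2i_y,2)$, $(i_x\pm j_x,\pm i_y+j_y,2)$, $(2j_x,0,2)$, $(0,2j_y,2)$, computer-assisted evaluation of the resulting cubic coefficients $a$, $e$, $f$, and then the standard planar phase-plane and Jacobian analysis of $\dot X=\sigma X-(aX^2+eY^2)X$, $\dot Y=\sigma Y-(eX^2+fY^2)Y$, exactly as in the paper's reduced system and its four-case classification. The only (inessential) difference is that the paper expands the center manifold in the parameter-independent Stokes/Laplacian eigenbasis rather than the eigenfunctions of $L_R$ to simplify the mode-by-mode inversion, a computational device your sketch subsumes under the symbolic verification you already flag.
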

 
\section{Proof of the Main Theorems}
First we give the preliminary setting that will be used in the proof of the main theorems.

The first step is to find the adjoint critical eigenvectors.  The adjoint equation of \eqref{EV} is:
\begin{equation}\label{adjoint}
\begin{aligned}
 & \Pr(\Delta \mathbf{u}^{\ast}-\nabla p^{\ast})+\theta^{\ast}{\bf k}=\bar{\beta}\mathbf{u}^{\ast},\\
 & \Delta\theta^{\ast}+R\, \Pr\, w^{\ast}=\bar{\beta}\theta^{\ast}.
\end{aligned}
\end{equation}
The eigenfunctions of \eqref{adjoint} can be represented by the separation of variables \eqref{sepofvar}. Also the eigenvalues of \eqref{adjoint} are same as the eigenvalues of \eqref{EV}, i.e. $\bar{\beta}$ satisfies \eqref{charequ}.  We find the amplitudes of the critical adjoint eigenvectors as:
\begin{equation}
W_S^{\ast}=\beta_S^1(R)+\gamma_S^2,\qquad \Theta_S^{\ast}=R\, \Pr, 
\end{equation}
where $\beta_S^1(R)$  satisfies the PES condition \eqref{PES1}. 

Let $I$ and $J$ be the indices of the critical modes, i.e. $\mathcal{C}=\{I,J\}$ in \eqref{PES1}. We will denote:
\[ \phi_I=\phi_I^1, \quad \phi_J=\phi_J^1, \quad \beta(R)=\beta_I^1(R)=\beta_J^1(R).\]  

We study the dynamics on the center manifold, we write:
\[\phi=\phi^{c}+\Phi(x),\] 
in \eqref{functional}, where $\Phi$ is the center manifold function and 
 \[\phi^{c}=x_1 \phi_{I}+x_2 \phi_{J}.\] 
 
Multiplying the governing evolution equation \eqref{functional} by the adjoint eigenvectors $\phi_{I}^{\ast}$ and $\phi_J^{\ast}$, we see that the amplitudes of the critical modes satisfy the following equations
\begin{equation}\label{reduced1}
\begin{aligned}
& \frac{dx_1}{dt}= \beta(R)x_1+\frac{1}{\left\langle \phi
_{I},\phi _{I}^{\ast }\right\rangle }\left\langle G( \phi ),\phi _{I}^{\ast }\right\rangle, \\
& \frac{dx_2}{dt}=\beta(R)x_2+\frac{1}{\left\langle \phi
_{J},\phi _{J}^{\ast }\right\rangle }\left\langle G( \phi ),\phi _{J}^{\ast }\right\rangle .
\end{aligned}
\end{equation}
The pairing $\left\langle \cdot ,\cdot \right\rangle $
denotes the $L^{2}$-inner product over $\Omega $.

By \eqref{PES1} and \eqref{charequ}, 
\begin{equation}\label{betaRc}
\beta(R)=g(R-R_c)+o((R-R_c)^2), \quad \text{as } R\rightarrow R_c,
\end{equation}
with 
\[g=\frac{\Pr \alpha^2}{(\Pr+1)\gamma^4}\neq 0.\]

We write the phase space as 
\begin{equation*}
H=E_{1}\oplus E_{2}, \qquad E_{1}=\text{span}\{\phi_{I},\phi_{J}\}, \qquad E_{2}=E_{1}^{\perp }.
\end{equation*}

When the linear part of \eqref{reduced1} is diagonal, we have the following approximation of the center manifold; see Ma and Wang \cite{ptd}:
\begin{equation}\label{cmrealformula}
-\mathcal{L}_{R}\Phi \left( x,R\right) =P_{2}G\left( \phi
^c\right) +o(2),
\end{equation}
where $\mathcal{L}_{R}=L_{R}\mid _{E_{2}}$, $P_2$ is the projection onto $E_2$ and
\begin{equation*}
o(n)=o\left( \left\vert x\right\vert ^{n}\right) +O\left( \left\vert
x\right\vert ^{n}\left\vert \beta\left( R\right) \right\vert
\right) .
\end{equation*}

Using \eqref{cmrealformula} we can write the center manifold function as:
\[
\Phi(x,y)=x^2\Phi_{1}+ xy\Phi_{2}+y^2 \Phi_{3}+o(2)
\]
Using this approximation and the bilinearity of $G$, we can write \eqref{reduced1} as:
\begin{equation}\label{reduced 2}
\begin{aligned}
&
\frac{dx}{dt}=\beta (R) x+ x(a_1 x^2+a_2 y^2)+o(3), \\
&
\frac{dy}{dt}=\beta (R) y +y(b_1 x^2+ b_2 y^2 )+o(3).
\end{aligned}
\end{equation}

In dynamic transition problems, the center manifold is generally expanded using the eigenfunctions of the original linear operator. However, following Sengul and Wang \cite{Sengul2011}, we will expand the center manifold using a different basis. Namely we will consider the eigenfunctions $\mathbf{u}_S$ of the Stokes equation for the velocity together with eigenfunctions $\theta_S$ of the Laplace equation. The main advantage of such an expansion is that the eigenvalues and eigenfunctions are independent of the system parameters, namely the Prandtl number $\Pr$ and the Rayleigh number $R$, while still spanning the same functional space \eqref{func spaces} which leads to computational advantages.

For this reason we turn to the following eigenvalue problem with the boundary conditions \eqref{bc} of the original problem:
\[
\begin{aligned}
& \Delta \mathbf{u}_S-\nabla p=\rho\, \mathbf{u}_S, \\
& \Delta \theta_S=\rho\, \theta_S,\\
& \text{div} \mathbf{u}_S=0.
\end{aligned}
 \]
By the classical theory of elliptic operators, the eigenvectors $\{e_S^1=(\mathbf{u}_S,0),\, e_S^2=(0,\theta_S)\}$ form a basis of the phase space $H$. Moreover, $e_S$ can be expressed by the same separation of variables \eqref{sepofvar}. There are three cases to be considered.  
 \begin{itemize}
 \item If $(s_x,s_y)=(0,0)$ and $s_z\neq0$, then $e_S^1=0$ and
 $$e_S^2=(0,\theta_S), \qquad \Theta_S=1.$$
 \item If $s_x^2+s_y^2\neq 0$ and $s_z=0$, then there are eigenmodes which have the form $e=(\mathbf{u},0)$ with $\mathbf{u}=(u,v,0)$. For such modes, it can be easily verified that:
 \[ \langle G(\phi^c),e\rangle=0. \]
 Thus by \eqref{cmrealformula}, such modes will not be present in the lowest order approximation of the center manifold function. 
 
\item  If $s_x^2+s_y^2\neq 0$ and $s_z\neq0$, then the multiplicity of an eigenvalue is two and the eigenvectors are:
\begin{equation*}
\begin{aligned}
&e_S^1=(\mathbf{u}_S,0), \qquad W_S=1,\\
&e_S^2=(0,\theta_S), \qquad \Theta_S=1.\\
\end{aligned}
\end{equation*}
\end{itemize}

The following lemma guarantees that the projection $P_2$ in \eqref{cmrealformula} can be ignored from a computational point of view.
\begin{lemma}\label{expansion lemma}
For $i=1,2$,
\begin{equation}\label{method1}
\begin{aligned}
& P_2 e_S^i=e_S^i \quad &&\text{for } S\notin \mathcal{C}, \\
& \langle P_2 G(\phi^c),e_S^i \rangle=0 && \text{for } S\in\mathcal{C}.
\end{aligned}
\end{equation}
\end{lemma}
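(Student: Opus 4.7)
The plan is to reduce both statements to orthogonality properties of the separation-of-variables trigonometric basis, with the decisive observation being how the quadratic nonlinearity acts on the vertical wave index.

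For the first identity, I would write the spectral projection as $P_1 v = \sum_{K\in\mathcal{C}} \langle v,\phi_K^{\ast}\rangle\langle\phi_K,\phi_K^{\ast}\rangle^{-1}\phi_K$, so that $P_2 e_S^i = e_S^i$ is equivalent to $\langle e_S^i,\phi_K^{\ast}\rangle = 0$ for every $K\in\mathcal{C}$ and $S\notin\mathcal{C}$. Both $e_S^i$ and $\phi_K^{\ast}$ are written in the separated form \eqref{sepofvar} with wave indices $S$ and $K$. Because the one-dimensional factors $\sin(s_x\pi x/L_1),\cos(s_x\pi x/L_1)$, etc., are orthogonal on $(0,L_1)$, $(0,L_2)$, $(0,1)$ unless the wave indices match, the $L^2$-inner product factorizes into a product of one-dimensional integrals that vanishes whenever any component of $S$ differs from the corresponding component of $K$. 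Since $S\ne K$, we are done.

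For the second identity the key observation is about the vertical wave number. Each critical mode $\phi_K$, $K\in\mathcal{C}$, has $k_z=1$, so its velocity and temperature components have $z$-dependence drawn from $\{\cos(\pi z),\sin(\pi z)\}$. Because $G$ is bilinear and $\phi^c = x_1\phi_I+x_2\phi_J$, every term in $G(\phi^c)$ is built from products of these two functions and of their derivatives. The identities
\[
\cos(\pi z)\cos(\pi z)=\tfrac{1}{2}(1+\cos(2\pi z)),\quad
\sin(\pi z)\sin(\pi z)=\tfrac{1}{2}(1-\cos(2\pi z)),\quad
\cos(\pi z)\sin(\pi z)=\tfrac{1}{2}\sin(2\pi z),
\]
together with the fact that $\partial_z$ merely swaps $\sin\leftrightarrow\cos$, show that every component of $G(\phi^c)$ has vertical wave index in $\{0,2\}$, never $1$. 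The Leray projection $\mathcal{P}$ subtracts a gradient of a pressure whose wave indices mirror those of the source, so it preserves this property.

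Given this, for any $K\in\mathcal{C}$ the inner product $\langle G(\phi^c),\phi_K^{\ast}\rangle$ reduces via separation to a $z$-integral of a function of $z$-index in $\{0,2\}$ against one of $z$-index $1$, and each such integral on $(0,1)$ vanishes. Hence $P_1 G(\phi^c)=0$, so $P_2 G(\phi^c)=G(\phi^c)$. The same $z$-orthogonality argument applied to $e_S^i$ (which also has $s_z=1$ for $S\in\mathcal{C}$) yields $\langle G(\phi^c),e_S^i\rangle=0$, and combining these gives $\langle P_2 G(\phi^c),e_S^i\rangle = 0$.

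Nothing here is conceptually difficult; the main obstacle is simply keeping careful track of how derivatives, the Leray projection, and the separation-of-variables structure interact, and verifying that the one-dimensional $z$-integrals of $\cos(\pi z)\cdot 1$, $\cos(\pi z)\cos(2\pi z)$, $\sin(\pi z)\sin(2\pi z)$, and $\sin(\pi z)\cdot\sin(2\pi z)$ on $(0,1)$ all vanish. Once this bookkeeping is done, both parts of the lemma follow from the single structural fact that a bilinear interaction of modes with vertical index $1$ cannot reproduce a mode with vertical index $1$.
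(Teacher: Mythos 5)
Your proof is correct in substance but takes a more explicit, computational route than the paper, and the difference is worth recording. For the first identity the two arguments coincide: both reduce to trigonometric orthogonality of distinct separated indices (the only caveat is that the paper defines $E_2=E_1^{\perp}$, so $P_2$ is the \emph{orthogonal} projection and the relevant condition is $\langle e_S^i,\phi_K\rangle=0$ rather than $\langle e_S^i,\phi_K^{\ast}\rangle=0$; both hold for the same reason, so nothing is lost). For the second identity the paper argues structurally: since $\operatorname{span}\{\phi_S^1,\phi_S^2\}=\operatorname{span}\{e_S^1,e_S^2\}$ for $S\in\mathcal{C}$, one has $P_2e_S^i=c_{S,i}\phi_S^2$, and then $\langle P_2G(\phi^c),\phi_S^2\rangle=0$ is checked by ``direct computation.'' You instead prove the stronger and more transparent fact that every component of $G(\phi^c)$ carries vertical wave index in $\{0,2\}$, so that $G(\phi^c)$ is simultaneously orthogonal to $\phi_K$, $\phi_K^{\ast}$ and $e_S^i$ for all critical indices; this yields $P_2G(\phi^c)=G(\phi^c)$ and $\langle G(\phi^c),e_S^i\rangle=0$ in one stroke and makes explicit what the paper's ``direct computation'' actually is. That is a genuinely cleaner organization of the same underlying fact.

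One step needs patching. Your blanket claim that a $z$-integral of an index-$\{0,2\}$ profile against an index-$1$ profile always vanishes is false: $\int_0^1\sin(\pi z)\,dz=2/\pi\neq0$. The argument survives only because the components of $G(\phi^c)$ that pair against $\sin(\pi z)$ (the vertical velocity and temperature components of $e_S^i$ and $\phi_K^{\ast}$) are pure $\sin(2\pi z)$ with no $z$-constant part — e.g.\ $(\mathbf{u}\cdot\nabla)\theta$ and the third component of $(\mathbf{u}\cdot\nabla)\mathbf{u}$ are sums of $\cos(\pi z)\sin(\pi z)$ and $\sin(\pi z)\cos(\pi z)$ products, and the Leray correction $\partial_z q$ likewise has no constant part — while the index-$0$ contributions occur only in the horizontal velocity components, which pair against $\cos(\pi z)$. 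Your final list of integrals to check is exactly the right one (it correctly omits $\sin(\pi z)\cdot1$), but you should state \emph{why} that pairing never occurs rather than appeal to a general orthogonality that does not hold.
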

\begin{proof}
First note that:
\[ E_1=\text{span}\{\phi_I^1,\phi_J^1\}\subset\text{span}\{e_S^1,e_S^2\mid S\in\mathcal{C}\}.\] 
Thus
\[E_2=E_1^{\perp}\supset \text{span}\{e_S^1,e_S^2\mid S\notin \mathcal{C}\}.\]
Thus we have the first equation in \eqref{method1}.

Since 
\[ 
\text{span}\{\phi_S^1,\phi_S^2\mid S\in\mathcal{C}\}=\text{span}\{e_S^1,e_S^2\mid S\in\mathcal{C}\},
\] 
there must exist constants $c_{S,1}\neq0$, $c_{S,2}\neq0$ such that
\[ P_2 e_S^1=c_{S,1} \phi_S^2, \quad P_2 e_S^2=c_{S,2} \phi_S^2, \quad \text{for } S\in\mathcal{C}.\]
Since by direct computation we have
\[ \langle P_2 G(\phi^c),\phi_S^2\rangle=0, \quad \text{for } S\in\mathcal{C}, \]
we also have second equation in \eqref{method1}.
\end{proof}

Now we write:
\begin{equation}
\Phi=\sum_{S\in\mathcal{S},\, i=1,2}\Phi_S^i(x,y) e_S^i+o(2),
\end{equation}
where $\mathcal{S}$ denotes some index set which will be specified later. Here $\Phi_S^i$  are  quadratic polynomials in $x$ and $y$. 

Let
\[
\begin{aligned}
& \mathcal{Z}_{\alpha}^{roll}=\{K=(k,0,1) \text{ or } K=(0,k,1): k\neq0,\,\alpha_K=\alpha\},\\
& \mathcal{Z}_{\alpha}^{rec}=\{K=(k_1,k_2,1):k_1\neq0,\,k_2\neq0, \alpha_K=\alpha\}.
\end{aligned}
\]

\begin{lemma}\label{S1Lemma}
For $S_1=(0,0,2)$, we have:
\[
\Phi_{S_1}=[\Phi_{S_1}^1,\Phi_{S_1}^2]^T=[0,\Phi_{S_1}^I x^2+\Phi_{S_1}^J y^2]^T,
\]
where for $K\in\{I,J\}$,
\begin{equation} \label{PhiS1}
\Phi_{S_1}^K=\left\{
\begin{array}{c}
-\frac{\gamma^2}{16\pi}[ 0 , 1]^T, \quad \text{if } K\in \mathcal{Z}_{\alpha}^{rec},\\
-\frac{\gamma^2}{8\pi}[ 0 , 1]^T, \quad \text{if } K\in \mathcal{Z}_{\alpha}^{roll}.
\end{array}\right.
\end{equation}
Also for $K_1\in\mathcal{Z}_{\alpha}^{rec},\, K_2\in\mathcal{Z}_{\alpha}^{roll}$:
\begin{equation}\label{GsJS1J}
 G_s(\phi_{K_1},e_{S_1},\phi_{K_1}^{\ast})=\frac{1}{2}G_s(\phi_{K_2},e_{S_1},\phi_{K_2}^{\ast})= \frac{L_1L_2\pi\gamma^2 R_c}{8}  [ 0, \Pr]^T.
\end{equation}
\end{lemma}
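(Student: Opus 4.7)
The plan is to project the center manifold approximation \eqref{cmrealformula} onto the $S_1$-direction in the $\{e_S^i\}$-basis and exploit two degeneracies occurring at $S_1=(0,0,2)$. First, since $(s_x,s_y)=(0,0)$, the velocity basis vector $e_{S_1}^1$ vanishes by the case analysis preceding the lemma, so $\Phi_{S_1}^1\equiv 0$ automatically. Second, the remaining basis vector $e_{S_1}^2=(0,\sin 2\pi z)$ is a genuine eigenvector of $L_R$: the buoyancy forcing $\Pr R\sin(2\pi z)\mathbf k=\nabla(-\Pr R\cos(2\pi z)/(2\pi))$ is a pure gradient and is annihilated by the Leray projection, while $\Delta\sin(2\pi z)=-4\pi^2\sin(2\pi z)$, so $L_R e_{S_1}^2=-4\pi^2 e_{S_1}^2$ with an eigenvalue independent of $R$ and $\Pr$.

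Taking the $L^2$-inner product of $-\mathcal L_R\Phi=P_2 G(\phi^c)+o(2)$ against $e_{S_1}^2$ and using Lemma~\ref{expansion lemma} to drop $P_2$ gives
\[
\Phi_{S_1}^2=\frac{\langle G(\phi^c),e_{S_1}^2\rangle}{4\pi^2\|e_{S_1}^2\|^2}+o(2),\qquad \|e_{S_1}^2\|^2=\tfrac{1}{2}L_1L_2.
\]
With $\phi^c=x\phi_I+y\phi_J$, the inner product expands into $x^2$, $xy$, and $y^2$ pieces. Since the horizontal factor of $e_{S_1}^2$ is identically $1$, the $xy$ piece reduces to horizontal integrals of products of distinct trigonometric modes, and these vanish because $I\ne J$. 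This yields the asserted form $\Phi_{S_1}^2=\Phi_{S_1}^I x^2+\Phi_{S_1}^J y^2$ with scalar coefficients.

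The remaining content of the lemma is then a direct evaluation of
\[
\Phi_{S_1}^K=-\frac{\int_\Omega (\mathbf u_K\cdot\nabla)\theta_K\,\sin(2\pi z)\,dV}{4\pi^2\|e_{S_1}^2\|^2},\quad K\in\{I,J\},
\]
together with $G_s(\phi_K,e_{S_1},\phi_K^{\ast})$. Substituting the separation of variables \eqref{sepofvar} and the amplitude identities $U_K,V_K\propto W_K$, $W_K=\gamma^2$, $\Theta_K^{\ast}=R_c\Pr$, every integral factorises into one-dimensional pieces. The key vertical integrals are $\int_0^1\sin(\pi z)\cos(\pi z)\sin(2\pi z)\,dz=1/4$ and $\int_0^1\sin^2(\pi z)\cos(2\pi z)\,dz=-1/4$, and the horizontal integrals collapse via $(k_x\pi/L_1)^2+(k_y\pi/L_2)^2=\alpha^2$. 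For \eqref{GsJS1J} the extra observation $\mathbf u_{e_{S_1}^2}=0$ kills $G(e_{S_1},\phi_K,\phi_K^{\ast})$ and reduces $G_s$ to a single integral $-2\pi\int_\Omega w_K\cos(2\pi z)\theta_K^{\ast}\,dV$, which is handled by exactly the same mechanism.

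The only real obstacle is the bookkeeping: the distinction between roll and rectangle modes appears as a factor of two, arising from the fact that when $K$ has a vanishing horizontal wave index the corresponding one-dimensional factor $\int_0^{L_j}\cos^2=L_j/2$ becomes $\int_0^{L_j}1=L_j$. Tracking this factor reproduces both the roll/rectangle dichotomy in \eqref{PhiS1} and the factor of $\tfrac12$ in \eqref{GsJS1J}.
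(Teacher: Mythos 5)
Your proposal is correct and follows essentially the same route as the paper: project the center manifold relation \eqref{cmrealformula} onto $e_{S_1}^2$ (using that $e_{S_1}^1=0$ and that $e_{S_1}^2$ is an eigenvector of $\mathcal{L}_R$ with eigenvalue $-4\pi^2$), then evaluate the resulting trilinear integrals directly, with the roll/rectangle factor of two coming from the degenerate horizontal integral. Your version supplies slightly more detail than the paper's (the Leray-projection argument for the eigenvalue, the vanishing of the $xy$ cross term, and the explicit vertical integrals), and your sign bookkeeping is the one consistent with the stated values of $\Phi_{S_1}^K$.
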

\begin{proof}
 If $(s_x,s_y)=(0,0)$ and $s_z\neq0$, then by \eqref{cmrealformula}, for $K\in\{I,J\}$, we have
  \begin{equation}\label{cm0}
  \begin{aligned}
 \Phi_{S}^K= \frac{\langle G( \phi_K,\phi_K),e_S^2\rangle}{\langle e_S^2,\mathcal{L}_R^{\ast} e_S^2\rangle}=\frac{\langle G( \phi_K,\phi_K),e_S^2\rangle}{-s_z^2 \pi^2\langle e_S^2,e_S^2\rangle}.
 \end{aligned}
 \end{equation}
 Note that
\begin{equation}\label{es2-es2}
\langle e_S^2,e_S^2\rangle=\int_{\Omega}\sin^2 2\pi z=\frac{L_1 L_2}{2}.
\end{equation}
For $K_1\in\mathcal{Z}_{\alpha}^{rec},\, K_2\in\mathcal{Z}_{\alpha}^{roll}$, a direct computation yields:
\begin{equation} \label{GIIS1}
\begin{aligned}
& G(\phi_{K_1},\phi_{K_1},e_{S_1})=\frac{1}{2}G(\phi_{K_2},\phi_{K_2},e_{S_1})=\frac{-L_1L_2\pi \gamma^2}{8} [ 0, 1]^T.
\end{aligned}
\end{equation}
Here $G_s$ is the trilinear operator defined in \eqref{bilinear}. Now,
\eqref{PhiS1} follows from \eqref{cm0}, \eqref{es2-es2} and \eqref{GIIS1}. Also, \eqref{GsJS1J} follows from an easy computation.
\end{proof}

\begin{lemma}
If $S\notin\mathcal{C}$, $S=(s_x,s_y,s_z)$, $(s_x,s_y)\neq(0,0)$ and $s_z\neq0$, then
\begin{equation} \label{cm1}
\Phi_S= \left[ \begin{array}{cc}
\Phi_{S}^1 \\ \Phi_S^2 \end{array}\right]= -\frac{1}{\mathfrak{v}_S}\mathcal{A_S}^{-1}\left[ \begin{array}{cc}
\langle G( \phi^{c}),e_S^1\rangle \\ \langle G( \phi^{c}),e_S^2\rangle \end{array}\right].
\end{equation}
Here
\begin{equation}\label{As}
 \mathcal{A}_S=
\left(\begin{array}{cc}
-\Pr\,R_S \gamma_S^{-2}  & R \Pr \\
1 & -\gamma_S^2  \end{array} \right),
\end{equation}
and
\begin{equation}\label{vs}
\mathfrak{v}_S=\left\{\begin{array}{c}
\frac{L_1L_2}{4}, \text{if } s_x s_y=0,\\
\frac{L_1L_2}{8}, \text{if } s_x s_y\neq0.
\end{array} \right. 
\end{equation}
\begin{equation}\label{Rs}
R_S=\frac{\gamma_S^6}{\alpha_S^2}.
\end{equation}
\end{lemma}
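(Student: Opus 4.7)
The plan is to derive \eqref{cm1} directly from the center-manifold formula \eqref{cmrealformula} by representing $L_R$ as a $2\times 2$ matrix on the two-dimensional subspace $\mathrm{span}\{e_S^1,e_S^2\}$ attached to each fixed wave index $S$, then inverting it.

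First I would apply Lemma \ref{expansion lemma}: since $S\notin\mathcal{C}$, we have $P_2 e_S^i=e_S^i$, so taking the $L^2$-inner product of $-L_R\Phi=P_2G(\phi^c)+o(2)$ against $e_S^1$ and $e_S^2$ replaces the right-hand side by $\langle G(\phi^c),e_S^i\rangle$. Because the full family $\{e_{S'}^1,e_{S'}^2\}_{S'}$ is an orthogonal basis of $H$, the expansion $\Phi=\sum(\Phi_{S'}^1 e_{S'}^1+\Phi_{S'}^2 e_{S'}^2)+o(2)$ reduces these two equations to a $2\times 2$ linear system for $(\Phi_S^1,\Phi_S^2)$ alone — provided I know how $L_R$ acts on $e_S^1,e_S^2$.

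Next I would compute $L_R e_S^1$ and $L_R e_S^2$ in closed form. For $e_S^1=(\mathbf u_S,0)$, the Stokes eigenvalue equation gives $\mathcal{P}\Delta\mathbf u_S=-\gamma_S^2\mathbf u_S$, and because $W_S=\Theta_S=1$ in this basis the scalar function $w_S$ coincides with $\theta_S$; hence $L_R e_S^1=-\Pr\,\gamma_S^2 e_S^1+e_S^2$. For $e_S^2=(0,\theta_S)$ the delicate point is evaluating $\mathcal{P}(\theta_S\mathbf k)$: write $\theta_S\mathbf k=\mathbf v+\nabla q$, solve $\Delta q=\partial_z\theta_S$ with the boundary conditions \eqref{bc} to obtain $q=-(s_z\pi/\gamma_S^2)\cos(L_1^{-1}s_x\pi x)\cos(L_2^{-1}s_y\pi y)\cos(s_z\pi z)$, and check that the $z$-component of $\mathbf v=\theta_S\mathbf k-\nabla q$ has amplitude $1-s_z^2\pi^2/\gamma_S^2=\alpha_S^2/\gamma_S^2$; since $\mathbf v$ is divergence-free and shares the spatial structure of $\mathbf u_S$, this forces $\mathcal{P}(\theta_S\mathbf k)=(\alpha_S^2/\gamma_S^2)\mathbf u_S$. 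Combined with $\Delta\theta_S=-\gamma_S^2\theta_S$, we get $L_R e_S^2=(R\Pr\,\alpha_S^2/\gamma_S^2)e_S^1-\gamma_S^2 e_S^2$.

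Finally I would assemble and solve the linear system. A direct computation using the separation of variables \eqref{sepofvar} (and the amplitude relation $U_S^2+V_S^2=s_z^2\pi^2/\alpha_S^2$) gives $\|e_S^2\|^2=\mathfrak v_S$ (with $\mathfrak v_S$ as in \eqref{vs}) and $\|e_S^1\|^2=(\gamma_S^2/\alpha_S^2)\mathfrak v_S$. Substituting into the two scalar equations and dividing the first by $\alpha_S^2/\gamma_S^2$ so that both sides are normalized by $\mathfrak v_S$, the coefficient matrix reorganizes to exactly $-\mathfrak v_S\mathcal{A}_S$ with $\mathcal{A}_S$ as in \eqref{As}, using $\gamma_S^4/\alpha_S^2=R_S/\gamma_S^2$. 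Inverting yields \eqref{cm1}.

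The main obstacle is the Leray projection step: one must identify $\mathcal{P}(\theta_S\mathbf k)$ as a scalar multiple of $\mathbf u_S$ by explicitly solving the Poisson problem for the pressure correction and matching boundary data. After this identity is in hand, and together with the exact norms $\|e_S^1\|^2,\|e_S^2\|^2$, everything collapses into routine $2\times 2$ linear algebra yielding the matrix $\mathcal A_S$ of the lemma.
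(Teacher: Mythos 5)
Your proposal is correct and follows essentially the same route as the paper: remove the projection $P_2$ via Lemma~\ref{expansion lemma}, test \eqref{cmrealformula} against $e_S^1,e_S^2$, and invert the resulting $2\times 2$ block of $L_R$ on $\mathrm{span}\{e_S^1,e_S^2\}$ (the paper phrases this block as $\frac{1}{\mathfrak{v}_S}\langle e_S^m,\mathcal{L}_R^{\ast}e_S^n\rangle$, which by duality is the same matrix you obtain from $L_Re_S^m$). The only difference is that you carry out explicitly the computation the paper dismisses as ``easy'' — in particular the identity $\mathcal{P}(\theta_S\mathbf{k})=(\alpha_S^2/\gamma_S^2)\mathbf{u}_S$ and the norms $\|e_S^1\|^2=(\gamma_S^2/\alpha_S^2)\mathfrak{v}_S$, $\|e_S^2\|^2=\mathfrak{v}_S$ — and these check out against \eqref{As} and \eqref{vs}.
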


\begin{proof}
Let
 \begin{equation}\label{CM3}
\mathcal{A}_S^{nm}=\frac{1}{\mathfrak{v}_S} \langle e_S^m,\mathcal{L}_R^{\ast} e_S^n\rangle, \qquad m,n=1,2.
 \end{equation}
That is:
\[
\mathcal{A}_S= \frac{1}{\mathfrak{v}_S} \left( \begin{array}{cc}
 \langle e_S^1,\mathcal{L}_R^{\ast} e_S^1\rangle & \langle e_S^2,\mathcal{L}_R^{\ast} e_S^1\rangle \\
 \langle e_S^1,\mathcal{L}_R^{\ast} e_S^2\rangle & \langle e_S^2,\mathcal{L}_R^{\ast} e_S^2\rangle
 \end{array}\right).
 \]
Let
\begin{equation*}
\mathfrak{v}_S=\int_{0}^{L_1}\int_{0}^{L_2}\int_{0}^{1}\cos^2\frac{s_x\pi x_1}{L_1}\cos^2\frac{s_y\pi x_2}{L_2}\cos^2s_z\pi x_3.
\end{equation*}
Clearly $\mathfrak{v}_S$ is equal to the definition in \eqref{vs}. Also it is easy to compute that $\mathcal{A}_S$ is the same as given in \eqref{As}.
Now the lemma can be proved by taking the inner product of \eqref{cmrealformula} by $e_S^{i}$, $i=1,2$ and using Lemma~\ref{expansion lemma}.
 \end{proof}
Notice that 
\[\det \mathcal{A}_S=\mathfrak{v}_S^2 Pr(R_S-R).
\]
Since $R_S>R_c$ for $S\notin\mathcal{C}$, the determinant of $\mathcal{A}$ is always positive when $R$ is close to $R_c$. This guarantees that \eqref{cm1} can be solved for $\Phi_S^1$ and $\Phi_S^2$.

Finally note that for $S=(s_x,s_y,s_z)$ with $(s_x,s_y)\neq(0,0)$ and $s_z\neq0$ at $R=R_c$ we have:
\begin{equation}\label{normpsi}
\begin{split}
\langle \phi_{S},\phi_{S}^{\ast} \rangle &= \int_{\Omega}u_Su^{\ast}_S+v_Sv^{\ast}_S+w_Sw^{\ast}_S+\theta_S\theta^{\ast}_S\\
&=\mathfrak{v}_S\left((\frac{s_z^2\pi^2}{\alpha_S^2}+1)W_SW_S^{\ast}+\Theta_S\Theta_S^{\ast}\right)\\
&=\mathfrak{v}_S(\frac{\gamma_S^2}{\alpha_S^2}(\gamma_S^2+\beta(R))^2+R\,\Pr)\mid_{R=R_c}\\
&=\mathfrak{v}_S(\Pr+1)\gamma_S^4.
\end{split}
\end{equation}

\begin{notation}
We will use the following notation:
\[G(\phi_I,\phi_J,e_{S})=\left[ \begin{array}{c} G(\phi_I,\phi_J,e_{S}^1) \\G(\phi_I,\phi_J,e_{S}^2) \end{array}\right]. \]

We also define the following indices which will be used through out the proofs.
\begin{equation} \label{Si}
\begin{split}
& S_1=(0,0,2), \,S_2=(2i_x,0,2), \,S_3=(0,2i_y,2), \,S_4=(i_x,i_y+j_y,2), \\
&S_5=(i_x,j_y-i_y,2), \,S_6=(i_x,j_y,2), \,S_7=(i_x+j_x,i_y+j_y,2), \\ 
&S_8=(i_x-j_x,i_y+j_y,2),\,S_9=(i_x+j_x,-i_y+j_y,2),\\
& S_{10}=(i_x-j_x,-i_y+j_y,2), \,S_{11}=(2j_x,0,2), \,S_{12}=(0,2j_y,2).
\end{split}
\end{equation}
\end{notation}

\subsection{Proof of Theorem \ref{roll vs rec thm}}
Now we assume that $I=(i_x,i_y,1)$ and $J=(0,j_y,1)$, ($i_x\geq1$, $j_y>i_y\geq1$) are the first critical indices with identical wave numbers, $\alpha_I=\alpha_J$. 
Using \eqref{cmrealformula}, we find that the lowest order approximation of the center manifold is spanned by the eigenvectors having indices $S_{i}$ as given in \eqref{Si}: 
\[
\Phi(x,y)=
x^2(\Phi_{S_1}^I e_{S_1}+\sum_{S=S_2,S_3} \Phi_{S}e_S)+xy\sum_{S=S_4,S_5} \Phi_{S}e_S+y^2 \Phi_{S_1}^J e_{S_1}.
\]
Here
\[ \Phi_{S_1}=x^2\Phi_{S_1}^I+y^2\Phi_{S_1}^J,\]
where according to \eqref{PhiS1} we have:
\[ \Phi_{S_1}^J=2\Phi_{S_1}^I.\]
The coefficients in \eqref{reduced 2} are as follows:
\begin{equation}\label{roll vs rec coefs}
\begin{aligned}
& a_1=\frac{1}{\langle \phi_I,\phi_I^{\ast}\rangle}\left(\Phi_{S_1}^I G_s(\phi_I,e_{S_1},\phi_I^{\ast})+\sum_{i=2,3} \Phi_{S_i}G_s(\phi_I,e_{S_i},\phi_I^{\ast})\right), \\
& a_2=\frac{1}{\langle \phi_I,\phi_I^{\ast}\rangle}\left(\Phi_{S_1}^J G_s(\phi_I,e_{S_1},\phi_I^{\ast})+\sum_{i=4,5} \Phi_{S_i}G_s(\phi_J,e_{S_i},\phi_I^{\ast})\right), \\
& b_1=\frac{1}{\langle \phi_J,\phi_J^{\ast}\rangle}\left(\Phi_{S_1}^I G_s(\phi_J,e_{S_1},\phi_J^{\ast})+\sum_{i=4,5} \Phi_{S_i}G_s(\phi_I,e_{S_i},\phi_J^{\ast})\right), \\
& b_2=\frac{1}{\langle \phi_J,\phi_J^{\ast}\rangle}\Phi_{S_1}^J G_s(\phi_J,e_{S_1},\phi_J^{\ast}).
\end{aligned}
\end{equation}
All the terms involving $S_1$ above can be computed using the Lemma~\ref{S1Lemma}.

By direct computation:
\begin{equation}\label{GsISJ}
G_s(\phi_I,e_S,\phi_J^{\ast})=G_s(\phi_J,e_S,\phi_I^{\ast}),  \quad S=S_4,\, S_5.
\end{equation}
Putting \eqref{PhiS1}, \eqref{GsISJ} and \eqref{normpsi} into \eqref{roll vs rec coefs} we find that:
\begin{equation} \label{a2=2b1}
a_2=2b_1.
\end{equation}
By direct computation we can obtain the following:
\begin{equation}\label{res1}
\begin{aligned}
& G(\phi_I,\phi_I,e_{S})=-\frac{1}{2}\eta_S \left[ \begin{array}{c} \alpha^{-2} \\ \gamma^{-4} \end{array}\right], && S=S_2,S_3,\\
& G_s(\phi_I,\phi_J,e_S)=-\eta_S \left[ \begin{array}{c} \alpha^{-2} \\ \gamma^{-4} \end{array}\right], && S=S_4,S_5,
\end{aligned}
\end{equation}
\begin{equation}\label{res2}
\begin{aligned}
& G_s(\phi_I,e_{S},\phi_I^{\ast})=\frac{1}{2}\eta_S\left[ \begin{array}{c} \alpha^{-2} \\\Pr\,R_c \gamma^{-4} \end{array}\right], && S=S_2,S_3, \\
&G_s(\phi_J,e_S,\phi_I^{\ast})=\frac{1}{2}\eta_S\left[ \begin{array}{c} \alpha^{-2} \\\Pr\,R_c \gamma^{-4} \end{array}\right], && S=S_4, S_5,
\end{aligned}
\end{equation}
where
\begin{equation}\label{eta}
\eta_S=\frac{L_1 L_2 \pi(4\alpha^2-\alpha_S^2) \gamma^6}{32\alpha^2}.
\end{equation}
Using \eqref{cm0}, \eqref{cm1} and \eqref{res1} , we find:
\begin{equation}\label{res3}
\begin{aligned}
& \Phi_{S}=\left[ \begin{array}{c} \Phi_S^1\\\Phi_S^2\end{array}\right]=\frac{1}{16}\frac{\pi (4\alpha^2-\alpha_S^2) \gamma^2}{\Pr\, \alpha^4(R_c-R_S)}\left[ \begin{array}{c} \Pr \,R_c \alpha^2+\gamma^4\gamma_S^2\\ \Pr\, \frac{R_S}{\gamma_S^2}\alpha^2+\gamma^4\end{array}\right], \quad S=S_2,S_3,\\
& \Phi_{S}=\left[ \begin{array}{c} \Phi_S^1\\\Phi_S^2\end{array}\right]=\frac{1}{4}\frac{\pi (4\alpha^2-\alpha_S^2) \gamma^2}{\Pr\, \alpha^4(R_c-R_S)}\left[ \begin{array}{c} \Pr \,R_c \alpha^2+\gamma^4\gamma_S^2\\ \Pr\, \frac{R_S}{\gamma_S^2}\alpha^2+\gamma^4\end{array}\right], \quad S=S_4,S_5.\\
\end{aligned}
\end{equation}
Now putting \eqref{res2}, \eqref{res3}, \eqref{GIIS1}, \eqref{PhiS1} into \eqref{roll vs rec coefs} and normalizing the results by
\begin{equation}\label{normalizer}
\frac{L_1 L_2 R_c\gamma^4}{2^{10}\Pr \,\alpha^2\langle \phi_I,\phi_I^{\ast}\rangle}=\frac{ R_c}{2^{7}\alpha^2\Pr (1+\Pr)},
\end{equation}
and using \eqref{a2=2b1}, the equations \eqref{reduced 2} become:
\begin{equation}\label{roll vs rec reduced}
\begin{aligned}
&
\frac{dx}{dt}=\beta (R) x- x(a x^2+2c y^2)+o(3), \\
&
\frac{dy}{dt}=\beta (R) y -y(c x^2+ 2b y^2 )+o(3),
\end{aligned}
\end{equation}
where $a$, $b$ and $c$ are as defined in \eqref{definition abc}.
Note that $\kappa_{S_i}>0$ in \eqref{kappa} since $R_c<R_{S_i}$, $i=1,\dots,5$ and we have the following relations:
\begin{equation}\label{abc relation}
0<b<a,  \quad 0<b<c.
\end{equation}
Now we consider the approximate steady state equations of \eqref{roll vs rec reduced}:
\begin{equation}\label{roll vs rec approximate}
\begin{aligned}
& \beta (R) x- x(a x^2+2c y^2)=0, \\
& \beta (R) y -y(c x^2+ 2b y^2 )=0.
\end{aligned}
\end{equation}
Let us define 
\begin{equation}\label{d}
\omega^2:=\frac{c-a}{2(c-b)}.
\end{equation}
There are two cases to consider.
\begin{itemize}
\item[i)] If $c<a$, then the equations \eqref{roll vs rec approximate} have only four straight line orbits on the lines $y=0$ and $x=0$. And the following solutions of \eqref{roll vs rec approximate} are bifurcated on $\beta>0$:
\begin{equation}\label{steady1}
X_{\pm}=(\pm\sqrt{\frac{\beta}{a}},0), \quad Y_{\pm}=(0,\pm\sqrt{\frac{\beta}{2b}}).
\end{equation}
\item[ii)] If $c>a$  then there are four additional straight line orbits on the lines $y=\pm\omega x$. Note that in this case by \eqref{abc relation}, $c^2-ab>0$ and there are four additional  solutions bifurcated on $\beta>0$ which are given by:
\begin{equation}\label{steady2}
Z_{\pm}^{i}=(-1)^i(1,\pm \omega)\sqrt{\beta\frac{c-b}{c^2-a b}}, \quad i=1,2.
\end{equation}
\end{itemize}
Now the Jacobian of the vector field in \eqref{roll vs rec approximate} is:
\begin{equation}
J=\left(
\begin{array}{cc}
\beta-3a x^2-2c y^2 & -4c y x \\
-2c y x & \beta-c x^2-6b y^2
\end{array}
\right).
\end{equation}
The Jordan canonical forms of the Jacobian matrix evaluated at the steady states in \eqref{steady1} and \eqref{steady2} are as follows:
\begin{equation}\label{roll vs rec e.values}
\begin{aligned}
& J(X_{\pm})=2\beta\left(
\begin{array}{cc}
-1 & 0 \\
0 & \frac{a-c}{2a}
\end{array}
\right), 
\quad
J(Y_{\pm})=2\beta\left(
\begin{array}{cc}
\frac{b-c}{2b} & 0 \\
0 & -1
\end{array}
\right), 
\\
& J(Z_{\pm}^i)\sim 2\beta\left(
 \begin{array}{cc}
-1 & 0\\
0 & \frac{(b-c)(c-a)}{ab-c^2}
\end{array}
\right), \quad i=1,2.
\end{aligned}
\end{equation}
The stability of the steady states can be found by using \eqref{roll vs rec e.values}. Since $c>b$,  $Y_{\pm}$ are always stable on $R>R_c$. If $c>a$ then $X_{\pm}$ are stable and $Z_{\pm}$ are unstable on $R>R_c$. On the other hand if $c<a$ then $X_{\pm}$ are unstable and $Z_{\pm}$ are not bifurcated on $R>R_c$. Thus we have two transition scenarios and the results are shown in Figure \ref{regions}.

\begin{remark}\label{Rem}In the particular case $j_y=2i_y$, the equations in \eqref{roll vs rec reduced} can be reduced further. In this case we have: $\alpha_{S_2}=\alpha_{S_4}$ and $\alpha_{S_3}=\alpha_{S_5}$ which implies that $\gamma_{S_2}=\gamma_{S_4}$ and $\gamma_{S_3}=\gamma_{S_5}$. This in turn implies $\kappa_{S_2}=\kappa_{S_4}$ and $\kappa_{S_3}=\kappa_{S_5}$ and we get the relation $2a=b+c$. This implies $c>a$ since $c-a=a-b>0$.
\end{remark}

\subsection{Proof of Theorem \ref{roll vs roll thm}}
We point out the differences from the previous proof. We have $I=(i_x,0,1)$ and $J=(0,j_y,1)$ ($i_x\geq1$, $j_y\geq1$) as the first critical indices with identical wave numbers, $\alpha_I=\alpha_J$. First the center manifold is given by:
\[
\Phi(x,y)=
x^2\Phi_{S_1}^I\phi_{S_1}+xy\Phi_{S_6}e_{S_6}+y^2 \Phi_{S_1}^J e_{S_1}+o(2).
\]
Using this approximation, the coefficients in \eqref{reduced 2} are as follows:
\begin{equation}\label{roll vs roll coef}
\begin{aligned}
& a_1=\frac{1}{\langle \phi_I,\phi_I^{\ast}\rangle}\Phi_{S_1}^I G_s(\phi_I,e_{S_1},\phi_I^{\ast}), \\
& a_2=\frac{1}{\langle \phi_I,\phi_I^{\ast}\rangle}\left(\Phi_{S_1}^J G_s(\phi_I,e_{S_1},\phi_I^{\ast})+\Phi_{S_6}G_s(\phi_J,e_{S_6},\phi_I^{\ast})\right), \\
& b_1=\frac{1}{\langle \phi_J,\phi_J^{\ast}\rangle}\left(\Phi_{S_1}^I G_s(\phi_J,e_{S_1},\phi_J^{\ast})+\Phi_{S_6}G_s(\phi_I,e_{S_6},\phi_J^{\ast})\right), \\
& b_2=\frac{1}{\langle \phi_J,\phi_J^{\ast}\rangle}\Phi_{S_1}^J G_s(\phi_J,e_{S_1},\phi_J^{\ast}).
\end{aligned}
\end{equation}
The coefficients $a_1$ and $b_2$ are equal to $b_2$ in the proof of the first theorem. Thus we only need to find $a_2$ and $b_1$. A quick computation shows that:
\begin{equation}\label{roll vs roll GsISJ}
G_s(\phi_I,e_{S_6},\phi_J^{\ast})=G_s(\phi_J,e_{S_6},\phi_I^{\ast}),
\end{equation}
which shows that:
\begin{equation} \label{roll vs roll a2,b1}
a_2=b_1.
\end{equation}
By direct computation we can obtain the following:
\begin{equation}\label{roll vs roll res1}
G_s(\phi_I,\phi_J,e_{S})=-2\eta_S\left[ \begin{array}{c} \alpha^{-2} \\ \gamma^{-4} \end{array}\right], \quad S=S_6,
\end{equation}
\begin{equation}\label{roll vs roll res2}
G_s(\phi_J,e_{S},\phi_I^{\ast})=\eta_S \left[ \begin{array}{c} \alpha^{-2} \\\Pr \,R_c \gamma^{-4}\end{array}\right], \quad S=S_6.
\end{equation}
where $\eta$ is given by \eqref{eta}.
Using \eqref{cm0}, \eqref{cm1} and \eqref{roll vs roll res1} , we find:
\begin{equation}\label{roll vs roll res3}
\Phi_{S}=\left[ \begin{array}{c} \Phi_S^1\\\Phi_S^2\end{array}\right]=\frac{1}{2}\frac{\pi (4\alpha^2-\alpha_S^2) \gamma^2}{\Pr\, \alpha^4(R_c-R_S)}\left[ \begin{array}{c} \Pr \,R_c \alpha^2+\gamma^4\gamma_S^2\\ \Pr\, \frac{R_S}{\gamma_S^2}\alpha^2+\gamma^4\end{array}\right], \quad S=S_6.\\
\end{equation}
By normalizing all terms in \eqref{roll vs roll coef} by the normalizing factor \eqref{normalizer} and using \eqref{roll vs roll a2,b1}, the equations \eqref{reduced 2} become:
\begin{equation}\label{roll vs roll reduced}
\begin{aligned}
&
\frac{dx}{dt}=\beta (R) x- x(b x^2+d y^2)+o(3), \\
&
\frac{dy}{dt}=\beta (R) y -y(d x^2+ b y^2 )+o(3).
\end{aligned}
\end{equation}
Here $b>0$ and $d>0$ are definition \eqref{definition bd}. Since $\kappa>0$, we always have $b<d$.
Now we consider the approximate steady state equations of \eqref{roll vs roll reduced}:
\begin{equation}\label{roll vs roll approximate}
\begin{aligned}
& \beta (R) x- x(b x^2+d y^2)=0, \\
& \beta (R) y -y(d x^2+ b y^2 )=0.
\end{aligned}
\end{equation}
The equations \eqref{roll vs roll approximate} have always eight straight line orbits on the lines $y=0$, $x=0$ and $y=\pm x$. The eight solutions of \eqref{roll vs roll approximate} which are  bifurcated on $\beta>0$ are:
\begin{equation}\label{roll vs roll steady}
\begin{aligned}
& X_{\pm}=(\pm\sqrt{\frac{\beta}{b}},0), \quad Y_{\pm}=(0,\pm\sqrt{\frac{\beta}{b}}), \\
& Z_{\pm}^{i}=(-1)^i(1,\pm 1)\sqrt{\frac{\beta}{b+d}}, \quad i=1,2.
\end{aligned}
\end{equation}

The Jordan canonical forms of the Jacobian matrix evaluated at the steady states in \eqref{roll vs roll steady} are as follows:
\begin{equation}\label{roll vs roll e.values}
\begin{aligned}
& J(X_{\pm})=2\beta\left(
\begin{array}{cc}
-1 & 0 \\
0 & \frac{b-d}{2b}
\end{array}
\right), 
\quad
J(Y_{\pm})=2\beta\left(
\begin{array}{cc}
\frac{b-d}{2b} & 0 \\
0 & -1
\end{array}
\right), \\
& J(Z_{\pm}^i)\sim 2\beta\left(
 \begin{array}{cc}
-1 & 0\\
0 & \frac{d-b}{d+b}
\end{array}
\right), \quad i=1,2.
\end{aligned}
\end{equation}
Using \eqref{roll vs roll e.values} we can find that, since $b<d$, $X_{\pm}$ and $Y_{\pm}$ are stable and $Z_{\pm}^i$, $i=1,2$ are unstable on $R>R_c$ .
That finishes the proof.

\subsection{Proof of Theorem \ref{rec vs rec thm}}
Again we point out the differences from the previous proofs. We have $I=(i_x,i_y,1)$, $J=(j_x,j_y,1)$ ($i_x\neq0$, $i_y\neq0$, $j_x\neq0$, $j_y\neq0$) as the first critical indices with identical wave numbers, $\alpha_I=\alpha_J$. First the center manifold is given by:
\[
\begin{split} 
\Phi(x,y)=
&x^2(\Phi_{S_1}^I\phi_{S_1}+\sum_{S=S_2,S_3} \Phi_{S}e_S)+xy\sum_{S=S_7,S_8,S_9,S_{10}} \Phi_{S}e_S\\
&+y^2(\Phi_{S_1}^J\phi_{S_1}+\sum_{S=S_2,S_3} \Phi_{S}e_S)+o(2).
\end{split}
\]
Using this approximation, the coefficients in \eqref{reduced 2} are as follows:
\begin{equation}\label{rec vs rec coef}
\begin{aligned}
& a_1=\frac{1}{\langle \phi_I,\phi_I^{\ast}\rangle}(\Phi_{S_1}^I G_s(\phi_I,e_{S_1},\phi_I^{\ast})+\sum_{i=2,3} \Phi_{S_i}G_s(\phi_I,e_{S_i},\phi_I^{\ast})), \\
& a_2=\frac{1}{\langle \phi_I,\phi_I^{\ast}\rangle}(\Phi_{S_1}^J G_s(\phi_I,e_{S_1},\phi_I^{\ast})+\sum_{i=7,\dots,10} \Phi_{S_i}G_s(\phi_J,e_{S_i},\phi_I^{\ast})), \\
& b_1=\frac{1}{\langle \phi_J,\phi_J^{\ast}\rangle}(\Phi_{S_1}^I G_s(\phi_J,e_{S_1},\phi_J^{\ast})+\sum_{i=7,\dots,10} \Phi_{S_i}G_s(\phi_I,e_{S_i},\phi_J^{\ast})), \\
& b_2=\frac{1}{\langle \phi_J,\phi_J^{\ast}\rangle} (\Phi_{S_1}^J G_s(\phi_J,e_{S_1},\phi_J^{\ast})+\sum_{i=11,12} \Phi_{S_i}G_s(\phi_J,e_{S_i},\phi_J^{\ast}) ).
\end{aligned}
\end{equation}
$a_{1}$ and $b_{2}$ is computed in the same way as $a_1$ in the proof of the first theorem. So we will only find $a_2$ and $b_1$.
Also using
\begin{equation}\label{rec vs rec GsISJ}
G_s(\phi_I,e_{S},\phi_J^{\ast})=G_s(\phi_J,e_{S},\phi_I^{\ast}), \quad S=S_7,S_8,S_9,S_{10},
\end{equation}
we find that:
\begin{equation} \label{rec vs rec a2,b1}
a_2=b_1.
\end{equation}

By direct computation we can obtain the following:
\begin{equation}\label{rec vs rec res1}
G_s(\phi_I,\phi_J,e_{S})=-\frac{1}{2}\eta_S \left[ \begin{array}{c} \alpha^{-2} \\ \gamma^{-4} \end{array}\right], \quad S=S_7,S_8,S_9,S_{10},
\end{equation}

\begin{equation}\label{rec vs rec res2}
G_s(\phi_J,e_{S},\phi_I^{\ast})=\frac{1}{4}\eta_S \left[ \begin{array}{c} \alpha^{-2} \\\Pr \,R_c\gamma^{-4}\end{array}\right], \quad S=S_7,S_8,S_9,S_{10}.
\end{equation}
where $\eta$ is given by \eqref{eta}.
Using \eqref{cm0}, \eqref{cm1} and \eqref{rec vs rec res1} , we find:
\begin{equation}\label{rec vs rec res3}
\Phi_{S}=\left[ \begin{array}{c} \Phi_S^1\\\Phi_S^2\end{array}\right]=\frac{1}{8}\frac{\pi (4\alpha^2-\alpha_S^2) \gamma^2}{\Pr\, \alpha^4(R_c-R_S)}\left[ \begin{array}{c} \Pr \,R_c \alpha^2+\gamma^4\gamma_S^2\\ \Pr\, \frac{R_S}{\gamma_S^2}\alpha^2+\gamma^4\end{array}\right], \quad S=S_7,S_8,S_9,S_{10}.\\
\end{equation}
By normalizing all terms in \eqref{rec vs rec coef} by the normalizing factor \eqref{normalizer} and using \eqref{rec vs rec a2,b1}, the equations \eqref{reduced 2} become:
\begin{equation}\label{rec vs rec reduced}
\begin{aligned}
&
\frac{dx}{dt}=\beta (R) x- x(a x^2+e y^2)+o(3), \\
&
\frac{dy}{dt}=\beta (R) y -y(e x^2+ f y^2 )+o(3).
\end{aligned}
\end{equation}
Here $a$, $e$ and $f$ are positive numbers defined in \eqref{definition aef}.

Now we consider the approximate steady state equations of \eqref{rec vs rec reduced}:
\begin{equation}\label{rec vs rec approximate}
\begin{aligned}
& \beta (R) x- x(a x^2+e y^2)=0, \\
& \beta (R) y -y(e x^2+ f y^2 )=0.
\end{aligned}
\end{equation}
The analysis of the equations \eqref{rec vs rec approximate} is similar to the analysis of the equations \eqref{roll vs rec approximate} given in the proof of Theorem \ref{roll vs rec thm}. Thus we omit the details.

\section{Physical Remarks}
In this section we will use the main theorems to derive some physical conclusions. 

Before going into details, we have to make a remark about the critical wave number $\alpha$ since it is one of the parameters determining the transition numbers. Although $\alpha$ depends on the length scales, thanks to Lemma \ref{wave number estimate lemma}, we have certain bounds on its range of values. To recall, these are:
\begin{itemize}
\item[i)] $\alpha>1.55$ regardless of the length scale.
\item[ii)] If one of the length scales is greater than $2.03$ then $\alpha<3.10$.
\end{itemize} 
Taking a look into {\it Figure \ref{wavenumberestimate}}, one sees that the only case which is not covered by taking $\alpha$ in the range $1.55<\alpha<3.10$ is that of two critical rolls with indices $I=(1,0,1)$ and $J=(0,1,1)$ which happens when $L_1=L_2<1.69$.

There are only three possible cases when two modes with equal wave numbers become unstable simultaneously. Namely these two critical modes can be, a rectangular and a roll mode, both roll modes, both rectangular modes. We investigate each case separately.
\subsection{The first two critical modes are a roll and a rectangle.}
We first consider two critical wave indices $I=(i_x,i_y,1)$ of a rectangular pattern and $J=(0,j_y,1)$ of a roll pattern with equal wave numbers $\alpha=\alpha_I=\alpha_J$; see {\it Figure  \ref{RollRecPat}}.  We define the following number:
 \[ 
 A=\frac{j_y}{i_y}.
\]
Since we are assuming that $\alpha_I=\alpha_J$, we have $1\leq i_y<j_y$ and $A>1$. Notice that $A$ is the number of rolls to the number of the rectangle columns in the direction of rolls.  {\it Figure  \ref{wavemap}} shows that the possible values of $A$ in the small length scale regime $\max\{L_1,L_2\}<5.5$  are $4/3$, $3/2$, $2$, $3$, $4$.

Now to use Theorem \ref{roll vs rec thm} to describe the pattern selection after the transition, we have to compute the transition numbers $a$ and $c$ given by \eqref{definition abc}. These numbers depend on three parameters $\Pr$, $L_1$ and $L_2$. Equivalently one can use $\Pr$, $A$ and $\alpha$ as the parameters determining $a$, $b$ and $c$. 

Using
\[
\alpha_{S_2}=2\sqrt{\alpha^2-\frac{\alpha^2}{A^2}}, \quad \alpha_{S_3}=2\frac{\alpha}{A}, \quad \alpha_{S_4}=\sqrt{2\alpha(\alpha+\frac{\alpha}{A})}, \quad \alpha_{S_5}=\sqrt{2\alpha(\alpha-\frac{\alpha}{A})},
\]
in \eqref{definition abc}, we can compute the numerical values of $a$ and $c$ for a given value of $A$, $\alpha$ and $\Pr$.

There are two cases that can happen, depending on whether $a<c$ or $c<a$. In {\it Figure \ref{acplot}}, the regions where $c<a$ is shown for several parameter regimes. The results show that both transition scenarios described by Theorem \ref{main} are possible. In particular there are two parameter regimes such that $c<a$, namely when $\Pr<0.2$ and $1<A<1.4$ and for large $A$ or large $\Pr$.

Now some remarks about the basin of attraction of the rolls and rectangles are in order. When $a<c$, both rectangles and rolls are stable but their basins of attraction which are sectorial regions depend on an angle $\omega$ which depends on $\Pr$, $\alpha$ and $A$. In the particular case $A=2$, the mixed modes have a hexagonal pattern and we find that $a<c$. Moreover, $\omega$ is independent of $\alpha$ and $\Pr$ and is found to be $\omega=\arctan 1/2\approx 26.57^{\circ}$. So in this case the basin of attraction of rolls consists of two sectors each of which has an angle of $\pi-2\omega\approx126.87^{\circ}$ while the basin for the rectangles have an angle of $2\omega\approx 53.13^{\circ}$. This means that rolls will attract a wider region of initial conditions than rectangles do.

\subsection{The first two critical modes are both rolls.}
Now we consider two roll type critical modes. By the assumption of equal wave  numbers, the rolls has to be perpendicular to each other, i.e. $I=(i_x,0,1)$ and $J=(0,j_y,1)$. In this case after the first dynamic transition, always the rolls are stable and the mixed states are unstable. Moreover, rolls with index $I$ and rolls with index $J$ have uniform attraction basins.

\subsection{The first two critical modes are both rectangles.}
Now we consider the case where the first two critical modes both have rectangle patterns, i.e. $I=(i_x,i_y,1)$ and $J=(j_x,j_y,1)$ ($i_x>j_x\geq1$, $j_y>i_y\geq1$) are the first critical wave indices. In this case, the dynamic transitions depend on the numbers $a$, $e$ and $f$ given by Theorem \ref{rec vs rec thm}. To calculate these numbers we define the parameters $A$ and $B$:
\[
A=\frac{j_y}{i_y}>1, \quad B=\frac{i_x}{j_x}>1.
\]
Using the definition, we find that:
\[
\begin{aligned}
& \alpha_{2i_x,0}^2=4\frac{B^2 (A^2-1)}{B^2A^2-1}\alpha^2, \quad \alpha_{0,2i_y}^2=4\frac{B^2-1}{B^2A^2-1}\alpha^2,\\
& \alpha_{2j_x,0}^2=4\frac{A^2 (B^2-1)}{B^2A^2-1}\alpha^2, \quad \alpha_{0,2j_y}^2=4\frac{A^2-1}{B^2A^2-1}\alpha^2,\\
&\alpha_{i_x+(-1)^m j_x,(-1)^ni_y+j_y}^2= 2\left(\alpha^2+\frac{(-1)^m}{4B}\alpha_{2i_x,0}^2+A\frac{(-1)^n }{4}\alpha_{0,2i_y}^2\right), \quad m,n=1,2.
\end{aligned}
\]
Using this, we  computed $a$, $e$ and $f$ for several choices of $\Pr$, $\alpha$, $B$ and $C$. Our numerical calculations revealed that $a<e$ and $f<e$ for a vast amount of parameter choices. This means that the transition scenario is described as {\it Figure \ref{afe region}}. Then the rectangles with index $I$ and $J$ are both stable and the mixed modes are unstable.

However, the other transition scenarios can also be possible as we observed that $f<e<a$ when the Prandtl number is small, one of $A$ or $B$ is less than 2 and $A\neq B$. For an example see {\it Figure \ref{recvsrectest}}. In this case the transition scenario is described as {\it Figure \ref{fea region}}. Hence only rectangles with index $I$ are stable and the rectangles with index $J$ are unstable.

In particular, our numerical investigations suggest that one or both of the pure modes (rectangles) are stable while the mixed modes are unstable.

\section{Conclusions}
In this paper, we discuss the dynamic transitions of Rayleigh B{\'e}nard (RB) convection from a perspective of pattern formation. We focus on the case when two eigenvalues cross the imaginary axis simultaneously. This allows us to compare the stability of a pattern with respect to perturbations of other pattern types.  Our main assumption is that the wave numbers of the critical modes are equal. Under this assumption, we classify all the possible transition scenarios and determine in each case the preferred patterns and their basins of attraction depending on the system parameters.

The pattern of a simple critical mode is either a rectangle or a roll. Thus there are three possible cases when there are two critical modes: (a) one mode is rectangular, the other mode is a roll, (b) both modes are rolls, (c) both modes are rectangles. 
  
The following are some general characteristics of the transition for the RB convection with which already known (Ma and Wang \cite{Ma2004a,Ma2007}): 
 \begin{itemize}
\item[1)] The transition is Type-I. In particular, there is an attractor $\Sigma_R$ bifurcating on $R>R_c$. 
\item[2)] $\Sigma_R$ is homeomorphic to $S^1$ which comprises steady states and the connecting heteroclinic orbits.
\end{itemize}
The following are the results due to our main theorems:
\begin{itemize}
\item[3)] In all the scenarios, we found that only pure modes (rolls or rectangles) are stable and the mixed modes are unstable. Our result is conclusive (analytical proof) when one of the critical modes is a roll type. When both critical modes are rectangles, we only have computational evidence. 
\item[4)] When both critical modes are rolls, the stable steady states after the transition are rolls. When both critical modes are rectangles, computational evidence suggests that the stable steady states after the transition are rectangles. When one critical mode is a roll and the other one is a rectangle, the stable states after the transition can be either only rolls or both rolls and rectangles. 
\item[5)] When both rolls and rectangles are stable after the transition, these states have non-uniform sectorial basin of attractions. In the particular case, where the mixed states have a regular hexagonal pattern, the angle of the sector for rolls is $126.87^{\circ}$ while the angle of the sector for the rectangles is $53.13^{\circ}$. Thus rolls attract a wider range of initial conditions, making them a more preferable type of pattern.
 \end{itemize}

\appendix

\begin{figure}
\includegraphics[scale=.58]{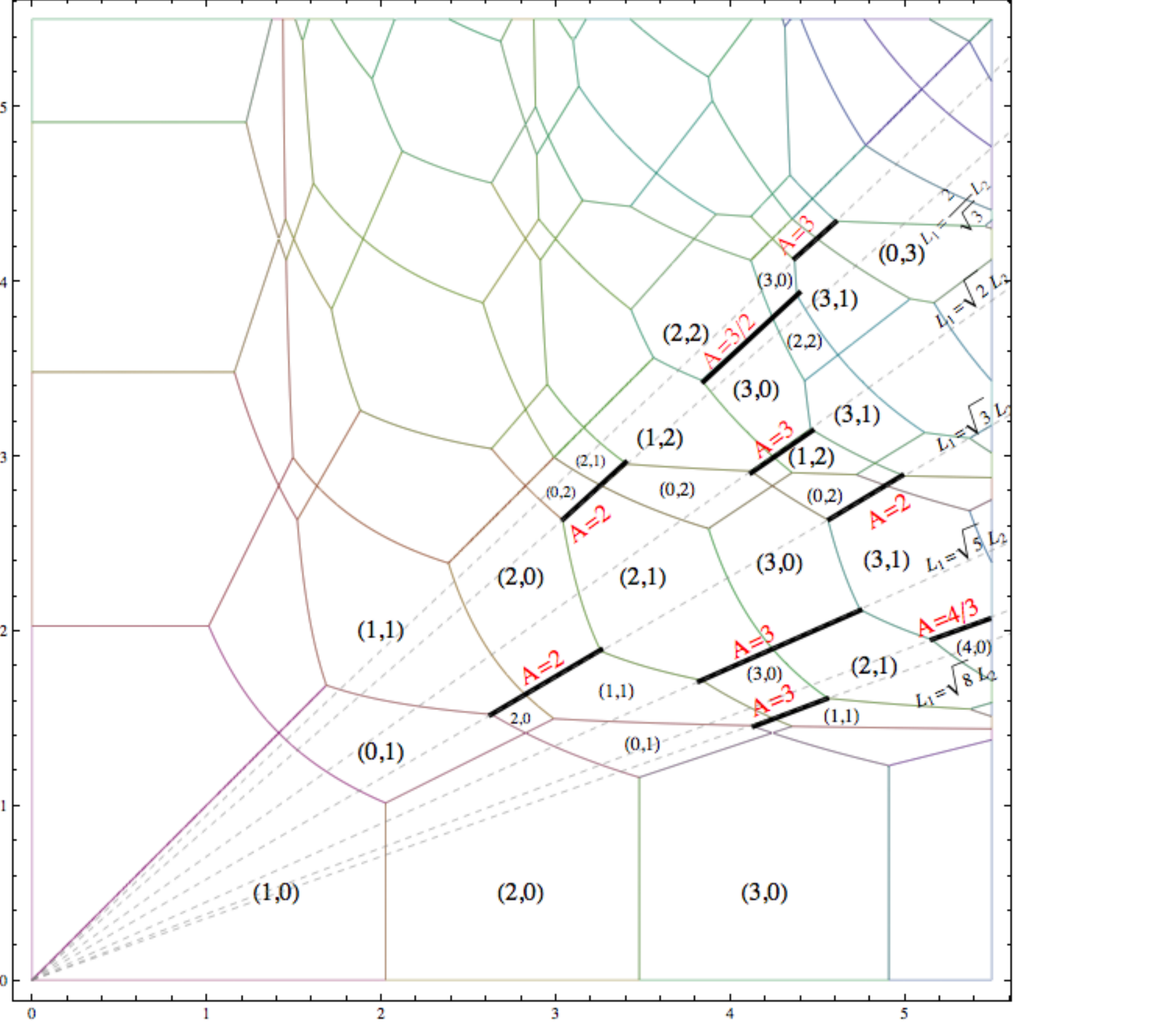}
\caption{The selection of the horizontal wave indices $(i_x,i_y)$. \label{wavemap}}
\end{figure}

\begin{figure}
\includegraphics[scale=1]{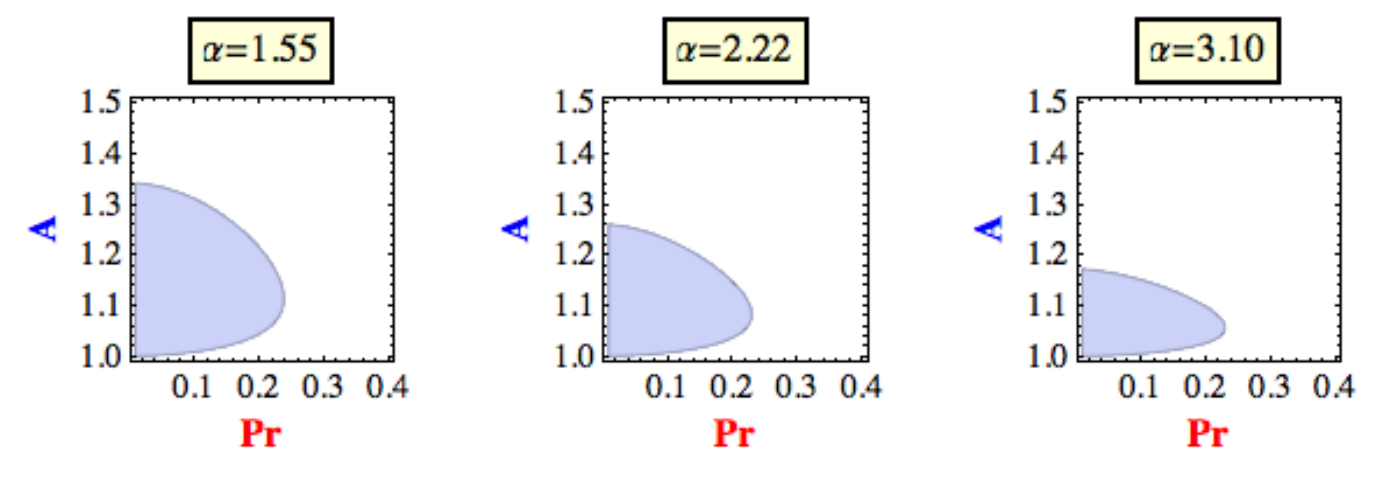}
\includegraphics[scale=1]{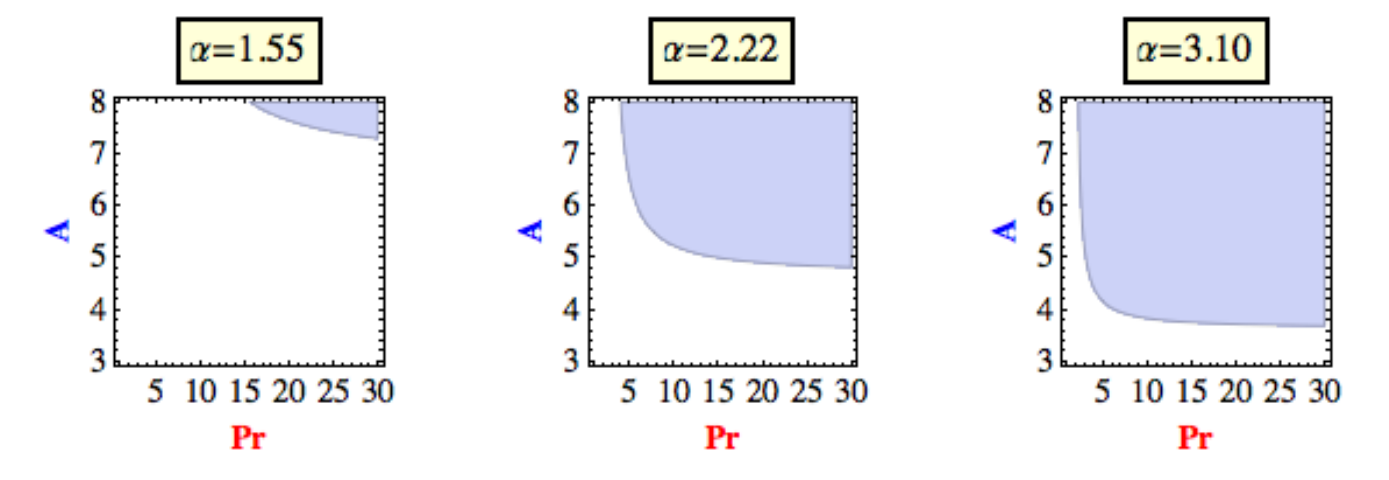}
\caption{The shaded regions show the parameter regimes where $c<a$ when two critical indices are $I=(i_x,i_y,1)$ and $J=(0,j_y,1)$. Here $A=j_y/i_y$. \label{acplot}}
\end{figure}

\begin{figure}
\includegraphics[scale=.58]{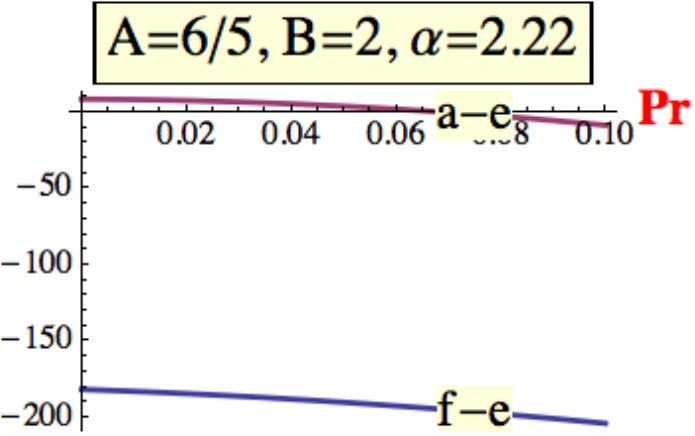}
\caption{The parameters $a-e$ and $f-e$ in the small Prandtl number regime. \label{recvsrectest}}
\end{figure}

\bibliographystyle{amsplain}
\bibliography{library}
\end{document}